\newif\ifconf\conffalse
\DeclareMathOperator{\polylog}{polylog}
\DeclareMathSymbol{\qedsymb} {\mathord}{AMSa}{"04}
\newcommand{\eps}{\varepsilon}
\newcommand{\abs}[1]{\left\lvert #1 \right\rvert}
\newcommand{\setst}[2]{\left\{\; #1 \,:\, #2 \;\right\}}        
\newcommand{\prob}[1]{\mathbb{P}\left[\,#1\,\right]}
\newcommand{\oct}{\quad\quad}                                   
\newcommand{\norm}[1]{\left\lVert #1 \right\rVert}
\newcommand{\gennorm}[2]{\left\lVert #2 \right\rVert_{#1}}
\newcommand{\N}{\ensuremath{\mathcal{N}}}
\newcommand{\ip}[2]{\ensuremath{\left\langle #1,#2\right\rangle}}
\newcommand{\R}{\mathbb{R}}
\newcommand{\C}{\mathbb{C}}
\DeclareMathOperator*{\E}{\mathbb{E}}
\renewcommand{\Pr}{\mathbb{P}\xspace}
\newcommand{\EquationName}[1]{\label{eq:#1}}
\newcommand{\LemmaName}[1]{\label{lem:#1}}
\newcommand{\SectionName}[1]{\label{sec:#1}}
\newcommand{\FigureName}[1]{\label{fig:#1}}
\newcommand{\Equation}[1]{Eq.\:\eqref{eq:#1}}
\newcommand{\Lemma}[1]{Lemma~\ref{lem:#1}}
\newcommand{\Figure}[1]{Figure~\ref{fig:#1}}
\newtheorem{theorem}{Theorem}
\newtheorem{definition}[theorem]{Definition}
\newtheorem{lemma}[theorem]{Lemma}
\newtheorem{prop}[theorem]{Proposition}
\newcommand{\proofbelow}{3pt}
\newcommand{\afterproof}{\hfill $\blacksquare$ \par \vspace{\proofbelow}}
\newcommand{\aftersubproof}{\hfill $\Box$ \par \vspace{\proofbelow}}
\renewenvironment{proof}{\noindent\textbf{Proof.}\,}{\afterproof}
\renewcommand{\th}{\ifmmode{^{\textrm{th}}}\else{\textsuperscript{th}\ }\fi}
\newcommand{\G}{\mathcal{G}}
\DeclareMathOperator{\supp}{supp}
\newcommand{\decnorm}[2]{\left\|#1\right\|_{#2}}
\newcommand{\fronorm}[1]{\decnorm{#1}{F}}
\newcommand{\opnorm}[1]{\decnorm{#1}{}}
\newcommand{\twonorm}[1]{\decnorm{#1}{2}}
\newcommand{\Xnorm}[1]{\decnorm{#1}{X}}
\newcommand{\inftynorm}[1]{\decnorm{#1}{\infty}}
\begin{document}

\ifconf
\numberofauthors{3}
\author{
\alignauthor
Jelani Nelson\titlenote{Work supported by NSF CCF-0832797 and NSF DMS-1128155}\\
\affaddr{Institute for Advanced Study}\\
 \email{minilek@ias.edu}
\alignauthor
Eric Price\titlenote{Work supported by an NSF Graduate Research Fellowship and a Simons Fellowship}\\
\affaddr{MIT}\\
 \email{ecprice@mit.edu}
\alignauthor
Mary Wootters\titlenote{Work supported by NSF CCF-0743372 and NSF CCF-1161233}\\
\affaddr{University of Michigan, Ann Arbor}\\
 \email{wootters@umich.edu}
}
\else
\author{Jelani Nelson\thanks{Institute for Advanced Study. \texttt{minilek@ias.edu}. Supported by NSF
  CCF-0832797 and NSF DMS-1128155.}\oct
Eric Price\thanks{MIT. \texttt{ecprice@mit.edu}. Work supported by an NSF Graduate Research Fellowship and a Simons Fellowship}\oct
Mary Wootters\thanks{University of Michigan, Ann Arbor. \texttt{wootters@umich.edu}. Supported by NSF CCF-0743372 and NSF CCF-1161233.}
}

\fi

\title{New constructions of RIP matrices\\ with fast multiplication and fewer rows}

\maketitle

\begin{abstract}
In compressed sensing, the {\it restricted isometry property} (RIP)
is a sufficient condition for the efficient reconstruction of a nearly $k$-sparse
vector $x \in \C^d$ from $m$ linear measurements $\Phi x$.  It is 
desirable for $m$ to be small, and for $\Phi$ to support fast matrix-vector multiplication.  
In this work, we give a randomized construction of RIP matrices $\Phi\in\C^{m\times
  d}$, preserving the $\ell_2$ norms of all $k$-sparse vectors with
distortion $1+\eps$,
where the matrix-vector multiply $\Phi x$
can be computed in nearly linear time.  The number of rows $m$
is on the order of $\eps^{-2}k\log d\log^2(k\log d)$. 
 Previous analyses of constructions of RIP matrices supporting
fast matrix-vector multiplies, such as the sampled
discrete Fourier matrix, required $m$ to be larger by roughly
a $\log k$ factor. 

Supporting fast matrix-vector
multiplication is useful for iterative recovery algorithms which
repeatedly multiply by $\Phi$ or $\Phi^*$.
Furthermore, our construction, together with a connection
between RIP matrices and the Johnson-Lindenstrauss lemma  in
[Krahmer-Ward, SIAM.\ J.\ Math.\ Anal.\ 2011], implies fast
Johnson-Lindenstrauss embeddings with asymptotically fewer rows than
previously known. 

Our approach is a simple twist on previous constructions.
Rather than choosing the rows for the embedding matrix to be
rows sampled from some larger structured matrix (such as the discrete
Fourier transform or a random circulant matrix), we instead choose
each row of the embedding
matrix to be a linear combination of a small number of rows of the original matrix,
with random sign flips as coefficients.
The main tool in our analysis is a recent bound for the
supremum of certain types of Rademacher chaos processes in
[Krahmer-Mendelson-Rauhut, arXiv abs/1207.0235].
\end{abstract}


\section{Introduction}\SectionName{intro}
The goal of {\it compressed sensing}~\cite{crt06b,don06} is to
efficiently reconstruct sparse, high-dimensional signals from a small
set of linear measurements.  We say that a $x \in \C^d$ is {\it
  $k$-sparse } if $\|x\|_0 \leq k$, where $\|x\|_0$ denotes the number
of non-zero entries.  The idea is that if $x$ is guaranteed to be
sparse or nearly sparse (that is, close to a sparse vector), then we
should be able to recover it with far fewer than $d$ measurements.
Organizing the measurements as the rows of a matrix $\Phi \in \C^{m
  \times d}$, one wants an efficient algorithm $\mathcal{R}$ which
approximately recovers a signal $x \in \C^d$ from the measurements
$\Phi x$; that is, $\|\mathcal{R}(\Phi x) - x\|_2$ should be small.
There are several goals in the design of $\Phi$ and $\mathcal{R}$.  We
would like $m \ll d$ to be as small as possible, so that $\Phi x$ can
be interpreted as a compression of $x$.  We also ask that the recovery
algorithm $\mathcal{R}$ be efficient, and satisfy a reasonable
recovery guarantee when $x$ is close to a sparse vector.


The recovery guarantee most popular in the literature is the
\emph{$\ell_2/\ell_1$ guarantee}, which compares the error between $x$
and the recovery $\mathcal{R}(\Phi x)$ to the error between $x$ and
the best $k$-sparse approximation of $x$.  More precisely, to satisfy
the $\ell_2/\ell_1$ guarantee there must exist a constant $C$ such
that for every $x$, $\mathcal{R}(\Phi x)$ satisfies
\begin{align}\label{e:l2l1}
  \|\mathcal{R}(\Phi x) - x\|_2 \le \frac{C}{\sqrt{k}}\cdot
  \inf_{\substack{y\in\C^d\\ \|y\|_0 \leq k}} \|x - y\|_1.
\end{align}
The value of $m$ and the pair $\Phi, \mathcal{R}$ can depend on $d$
and $k$.  Above, $\|\cdot \|_p$ denotes the $\ell_p$ norm $\|x\|_p =
\left(\sum_i |x_i|^p\right)^{1/p}$ and $\norm{x}_0$ denotes the
number of non-zero entries of $x$.

In this work, we will be concerned with a sufficient condition for the
$\ell_2/\ell_1$ guarantee, known as the {\it $(\eps, 2k)$ restricted
  isometry property}, or {\it $(\eps, 2k)$-RIP}.  We say that a matrix
$\Phi\in\C^{m\times d}$ has the $(\eps,k)$-RIP if
\begin{align}\label{e:RIP}
  \forall x\in\C^d ,\ \|x\|_0 \le k \Rightarrow (1-\eps)\|x\|_2^2\le
  \|\Phi x\|_2^2\le (1+\eps)\|x\|_2^2 .
\end{align}
It is known that if $\Phi$ satisfies the $(\eps,k)$-RIP for $\eps <
\sqrt{2} - 1$, then $\Phi$ enables the $\ell_2/\ell_1$ guarantee for
some constant $C$ \cite{CT05,crt06,Candes08}.  Furthermore,
this guarantee is achievable by efficient methods such as solving a linear
program~\cite{crt06,CSD01,DET06}.

In this work, we construct matrices $\Phi$
which satisfy the RIP with few rows, and which additionally support
fast matrix-vector multiplication.  The speed of the encoding time is
important not just for encoding $x$ as $\Phi x$, but also for the
reconstruction of $x$.  Aside from linear programming, there are
several iterative algorithms for recovering $x$ from $\Phi x$ when
$\Phi$ satisfies the RIP: for example Iterative Hard Thresholding
\cite{BD08}, Gradient Descent with Sparsification \cite{GK09}, CoSaMP
\cite{NT09}, Hard Thresholding Pursuit \cite{Foucart11}, Orthogonal
Matching Pursuit \cite{TG07}, Stagewise OMP (StOMP) \cite{DTDS12}, and
Regularized OMP (ROMP) \cite{NV09,NV10}.  All these algorithms have
running times essentially bounded by the number of iterations (which
is usually logarithmic in $d$ and an error parameter) times the
running time required to perform a
matrix-vector multiply with either $\Phi$ or $\Phi^*$, and so it is
important that this operation be fast.

If we do not require fast matrix-vector multiplication, it is known that RIP matrices  
exist with
 $m = \Theta(k\log (d/k))$.
For example, any matrix with i.i.d.\ Gaussian or subgaussian entries suffices~\cite{CT06,PMT:2007,BDDW08}.
This is known to be optimal even
for the $\ell_2/\ell_1$ recovery problem itself via a connection to
Gelfand widths \cite{Kashin77,GG84}
(see a discussion in \cite[Section 3]{BDDW08}), and is even required
to obtain a weaker randomized guarantee \cite{DIPW10}. 
However, for such matrices, na\"{i}ve matrix-vector multiplication requires time $O(dm)$.  Ideally, for the applications above,
this would instead be nearly linear in $d$. 
This has caused a search for RIP matrices that support fast matrix-vector
multiplication, leading to constructions that unfortunately require $m$ to be larger than the optimal by several
logarithmic factors.
We discuss previous work in closing this gap, and our contribution, in more detail in Section \ref{ssec:previouswork} below.

\subsection{Johnson-Lindenstrauss}

The Johnson-Lindenstrauss (JL) lemma of \cite{JL84} is related to the RIP, and,
as we will see below, our constructions of RIP matrices will imply constructions
of Johnson-Lindenstrauss transforms with fast embedding time.
The JL lemma states that there is a way to embed $N$ points in $\ell_2^d$ into a linear subspace
of dimension approximately $\log N$, with very little distortion.
\footnote{The JL lemma is most commonly stated over $\R$, so we state it this way here.  However, 
as in~\cite{KW11},
all of our results extend to complex vectors and complex matrices.}
\begin{lemma}\LemmaName{jl-lemma}
For any $0<\eps<1/2$ and any $x_1,\ldots,x_N\in \R^d$, there exists a
linear map $A\in \R^{m\times d}$ for $m = O(\eps^{-2}\log N)$ such
that for all $1 \le 1 < j \le N$,
$$ (1-\eps)\|x_i - x_j\|_2 \le \|Ax_i - Ax_j\|_2
\le (1+\eps)\|x_i - x_j\|_2 .$$
\end{lemma}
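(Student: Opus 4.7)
The plan is to prove the JL lemma via the probabilistic method: I will exhibit a random distribution over linear maps $A \in \R^{m \times d}$ with $m = O(\eps^{-2}\log N)$ such that, with positive probability, $A$ simultaneously preserves the norms of all the $\binom{N}{2}$ difference vectors $v_{ij} \defeq x_i - x_j$ up to a factor of $1 \pm \eps$. Since the statement is about norms of differences and $A$ is linear, it suffices to prove that for any fixed $v \in \R^d$, a random $A$ satisfies $\abs{\|Av\|_2^2 - \|v\|_2^2} \le \eps \|v\|_2^2$ with probability at least $1 - 1/N^2$ (say), and then union bound over the $\binom{N}{2}$ difference vectors $v_{ij}$.

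For the distribution, the cleanest choice is to let the entries of $A$ be i.i.d.\ $\N(0, 1/m)$ Gaussians, though $\pm 1/\sqrt{m}$ Rademacher entries would work equally well. For a fixed unit vector $v$, the random variable $\|Av\|_2^2$ is then the average of $m$ i.i.d.\ squared standard Gaussians (by rotational invariance), i.e.\ a scaled $\chi^2_m$ variable with mean $1$. The key analytical step is a tail bound of the form
\[
\Pr\bigl[\,\abs{\|Av\|_2^2 - \|v\|_2^2} > \eps \|v\|_2^2\,\bigr] \le 2\exp(-c\eps^2 m),
\]
valid for all $\eps \in (0, 1/2)$ and some absolute constant $c > 0$. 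This is a standard concentration bound for $\chi^2$ variables and can be obtained by a direct moment generating function computation (exponentiating, using independence, bounding $\E[\exp(\lambda Z^2)]$ for a standard Gaussian $Z$, then optimizing over $\lambda$).

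Given this one-vector concentration statement, the rest is routine: apply it to each of the $\binom{N}{2} < N^2$ vectors $v_{ij}$, set $m = C \eps^{-2} \log N$ with $C$ sufficiently large so that $2\exp(-c\eps^2 m) < 1/N^2$, and take a union bound. Then with positive probability every $v_{ij}$ is preserved to within a $(1 \pm \eps)$ factor in squared norm, hence within a $(1 \pm \eps)$ factor in norm (adjusting $\eps$ by a constant if desired). The existence of a good $A$ follows.

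The main (and essentially only) obstacle is the single-vector concentration inequality; the union bound and linearity steps are trivial. I expect no significant difficulty beyond verifying the $\chi^2$ tail bound carefully, for which the standard Laurent--Massart-style sub-exponential estimate suffices.
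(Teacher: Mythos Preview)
Your proposal is correct and follows the standard probabilistic-method proof of the Johnson--Lindenstrauss lemma. Note, however, that the paper does not actually prove \Lemma{jl-lemma}: it is stated as background with a citation to \cite{JL84}, so there is no ``paper's own proof'' to compare against. Your argument is the classical one (Gaussian random projection, $\chi^2$ concentration, union bound over the $\binom{N}{2}$ difference vectors), and it is complete and correct as outlined.
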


For any fixed set of vectors $x_1, \ldots, x_N$, we call a matrix
$A$ as in the lemma an {\it $\eps$-JL matrix} for that set.
It is known that there are sets of $N$ vectors for which 
$m = \Omega((\eps^{-2}/\log(1/\eps))\log N)$ is
required~\cite{Alon03}. In fact, this 
bound holds for any, not necessarily linear, embedding into $\ell_2^m$.

The JL lemma is a useful tool for speeding up solutions to several
problems in high-dimensional computational geometry; see for example
\cite{Indyk01, Vempala04}. Often, one has an
algorithm which is fast in terms of the number
of points but slow as a function of dimension: a good strategy to approximate a solution
quickly is to first reduce the input
dimension via the JL lemma before running the algorithm.
Recently dimensionality reduction via linear maps has also found
applications in approximate
numerical algebra problems such as linear regression and low-rank
approximation \cite{Sarlos06,CW09,CW12,MM12,NN12a}, and for the $k$-means
clustering problem \cite{BZMD11}. Going back to our original problem, the JL lemma
also implies
the existence of $(\eps,k)$-RIP matrices with $O(\eps^{-2}k\log(d/k))$
rows \cite{BDDW08}.

Due to its algorithmic importance, it is of interest to obtain JL matrices
which allow for fast embedding time, i.e.\ for which the matrix-vector product $Ax$ can
be computed quickly.
Paralleling the situation with the RIP, if we do not require that $A$ support fast matrix-vector multiplication,
there are many constructions of dense matrices $A$ which are JL matrices with high probability 
~\cite{Achlioptas03,AV06,DG03,FM88,IM98,JL84,Matousek08}.
For example, we may take $A$ to have i.i.d.\ Gaussian or subgaussian entries.
However, for such $A$ matrix-vector multiplication takes time $O(dm)$, where as before we would like it to be nearly linear in $d$. 
As with the RIP, if we require this embedding time, there is gap of several logarithmic factors between the upper and lower bounds on the target dimension $m$.  We
review previous work and state our contributions on this gap below.

\subsection{Previous Work on Fast RIP/JL, and Our Contribution}
\label{ssec:previouswork}
Above, we saw the importance of constructing RIP and JL matrices which 
not only have few rows but also support fast matrix-vector multiplication.
Below, we review previous work in this direction.  We then state our
contributions and improvements, which are summarized in \Figure{results}.

The best known construction of RIP matrices with fast multiplication 
come from either subsampled Fourier matrices (or related
constructions) or from partial circulant matrices.
Cand{\`e}s and Tao showed in~\cite{CT06} that a matrix whose rows are 
$m = O(k\log^6 d)$ random rows from the Fourier matrix satisfies the
$(O(1), k)$-RIP with positive probability.  The analysis of Rudelson
and Vershynin~\cite{RV08} and an optimization of it by
Cheraghchi, Guruswami, and Velingker~\cite{CGV13} improved the number
of rows required for the $(\eps,k)$-RIP to $m = O(\eps^{-2} k\log d
\log^3 k )$.
For circulant matrices, initial works
required $m \gg k^{3/2}$ to obtain the $(\eps,k)$-RIP
\cite{HBRN10,RRT12}; Krahmer, Mendelson and Rauhut~\cite{KMR12}
recently improved
the number of rows required to $m = O(\eps^{-2}k\log^2 d\log^2 k)$.

The first work on JL matrices with fast multiplication was by
Ailon and Chazelle~\cite{AC09}, which had $m = O(\eps^{-2} \log N)$ rows
and embedding time $O(d\log d + m^3)$.
In certain applications $N$ can be exponentially
large in a parameter of interest, e.g.\
when one wants to preserve the geometry of an entire subspace
for numerical linear algebra \cite{CW09,Sarlos06} or $k$-means
clustering \cite{BZMD11}, or the set of all sparse vectors in compressed
sensing \cite{BDDW08}. Thus, while the number of rows in this construction is optimal,
for some applications it is important to
improve the dependence on $m$ in the running time.
Ailon and Liberty~\cite{AL09} improved the running time to $O(d\log m +
m^{2+\gamma})$ for any desired $\gamma>0$ (with the same number of rows), 
and more recently the same authors gave a construction with $m = O(\eps^{-4}\log
N\log^4 d)$ supporting matrix-vector multiplies in time $O(d\log d)$
\cite{AL11}. Krahmer and Ward~\cite{KW11} improved the target
dimension to $m = O(\eps^{-2}\log
N\log^4 d)$. 

This last improvement of \cite{KW11} is actually a more general
result.  Specifically, they showed that, when the columns are
multiplied by independent random signs, any $(O(\eps),O(\log N))$-RIP
matrix becomes an $\eps$-JL matrix for a fixed set of $N$ vectors with
probability $1 - N^{-\Omega(1)}$. Since we saw above that sampling
$O(\eps^{-2}k\log d\log^3k)$ rows from the discrete Fourier or Hadamard matrix
satisfies $(\eps,k)$-RIP with constant probability, conditioning on
this event and applying the result of \cite{KW11} implies a JL matrix
with $m = O(\eps^{-2}\log N\log d\log^3(\log N)) = O(\eps^{-2}\log
N\log^4 d)$ and embedding time $O(d\log d)$.  We will use the same
method to obtain fast JL matrices from our constructions of RIP
matrices.

Another way to obtain JL matrices which support fast matrix-vector multiplication
is to construct sparse JL matrices~\cite{WDLSA09,DKS10,KN10,BOR10,KN12}.  
These constructions allow for very fast multiplication $Ax$ when the vector $x$ is itself sparse.
However, 
these constructions have an $\Omega(\eps)$ fraction of nonzero
entries, and it is known that any JL transform with $O(\eps^{-2}\log
N)$ rows requires an $\Omega(\eps/\log(1/\eps))$ fraction of nonzero
entries \cite{NN12b}.  Thus, for constant
$\eps$ and dense $x$, multiplication still requires time $\Theta(dm)$.

In this work we propose and analyze a new method for constructing RIP
matrices that support fast matrix-vector multiplication.  Loosely
speaking, our method takes any ``good'' ensemble of RIP matrices, and
produces an ensemble of RIP matrices with fewer rows by multiplying by
a suitable hash matrix.  We can apply our method to either subsampled
Fourier matrices or partial circulant matrices to obtain our improved
RIP matrices. 

Our construction follows a natural
intuition. For example, let $A$ be the discrete Fourier matrix, and suppose that $S$
is an $m \times d$ matrix with i.i.d.\ Rademacher entries, appropriately normalized. 
If $m = \Theta(\eps^{-2}k\log(d/k))$, then $SA$ satisfies the $(\eps,k)$-RIP with high probability,
because $S$ has the RIP, and $A$ is an isometry.
Unfortunately, this construction has slow matrix-vector
multiplication time.  On the other hand, if $S'$ is an extremely sparse
random sign matrix, with only one non-zero per row, then $S'A$ is a subsampled Fourier matrix,
supporting fast multiplication.  Unfortunately, in order to show that $S'A$ satisfies the RIP with high probability, 
$m$ must be increased by $\polylog(k)$ factors.
This raises the question: can we get the best of both worlds?
How sparse must the sign matrix $S$ be to ensure RIP with few rows, and can it be sparse enough 
to maintain fast matrix-vector multiplication?
In some sense, this question, and our results, connects the two lines of research---structured matrices and sparse matrices---on fast JL matrices mentioned above.  
Our results imply we can improve the number of rows over
previous work by using such a sparse sign matrix with only
$\polylog(d)$ non-zeroes per row. 

\paragraph{Our Main Contribution:} We give randomized constructions of
$(\eps,k)$-RIP matrices with $m = O(\eps^{-2}k\log d\log^2 (k\log d))$
and which support matrix-vector multiplication in time $O(d\log d) +
m\cdot \log^{O(1)} d$.  When combined with \cite{KW11}, we obtain a JL
matrix with a number of rows $m = O(\eps^{-2}\log N\log d\log^2((\log
N) \log d)) =
O(\eps^{-2}\log N\log^3 d)$ and same embedding time. Thus for both
RIP and JL, our constructions support fast matrix-vector multiply
using the fewest rows known.

\bigskip

Our RIP and JL matrices maintain the $O(d\log d)$ running time of the
sampled
discrete Fourier matrix as long as $k < d/\polylog d$, and
never have multiplication time larger than $d\cdot\log^{O(1)} d$ even
for $k$ as large as $d$. Our results are given in \Figure{results}.

We remark that the restrictions $k \ge \polylog m$ in \Figure{results} can
be eliminated as long as $\eps$ is not too small, because in this case
 it is already known how to 
obtain optimal RIP matrices with fast multiplication for small $k$.
More precisely, the Fast Johnson-Lindenstrauss Transform of~\cite{AC09}, combined 
with~\cite{BDDW08}, give an $(\eps, k)$-RIP matrix with $m = O(\eps^{-2}k\log(d/k))$ rows that
supports matrix-vector multiplies in time $O(d\log d)$ as long a $k \le \eps^{2/3}d^{1/3}/\polylog d$.
Meanwhile, our restrictions in \Figure{results} require $k \ge \polylog m$.
Thus, the only case when neither our result nor the results of~\cite{AC09,BDDW08} applies occurs when
$\eps < (\polylog d)/\sqrt{d}$.  We note that when $\eps < 1/\sqrt{d}$, it is unknown how to obtain any $(\eps,k)$-RIP 
matrix with fewer than $d < 1/\eps^2$ rows, and this is already trivially obtained by the identity matrix.

\renewcommand*\arraystretch{1.5}
\begin{figure*}
\footnotesize
\begin{tabular}{|p{1in}|p{1.5in}|p{1.5in}|p{1in}|p{1in}|}
\hline
Ensemble & \# rows $m$ needed for RIP &   Matrix-vector \newline multiplication time  &
Restrictions & Reference \\
\hline
Partial Fourier & $O(\eps^{-2}k\log d \log^3 k)$ & $O(d\log d)$ &
&\cite{RV08, CGV13}\\
\hline
Partial Circulant & $O(\eps^{-2} k \log^2 d \log^2 k)$ &
$O(d\log m)$ & &\cite{KMR12} \\
\hline
Hash $\times$ \newline Partial Fourier & $O(\eps^{-2} k\log d\log^2 (k\log d))$ &
$O(d\log d) + m\polylog d$ & $k \geq \log^{2.5} m$ &
this work \\
\hline
Hash $\times$ \newline Partial Circulant & $O(\eps^{-2}k\log d\log^2 (k\log d))$ &
$O(d\log m) + m\polylog d$& $k \geq \log^2 m  $ & this work \\
\hline
\end{tabular}
\normalsize 
\caption{Table of results.}\FigureName{results}
\end{figure*}

\subsection{Notation and Preliminaries}
We set some notation. We use $[n]$ to denote the set $\{1,\ldots,n\}$.
 We use $\twonorm{\cdot}$
 denote the $\ell_2$ norm of a vector, and $\opnorm{\cdot}$,
 $\fronorm{\cdot}$ to denote the operator and Frobenius norms of a
 matrix, respectively.
 For a set $\mathcal{S}$ and a norm $\decnorm{\cdot}{X}$,
 $d_{\decnorm{\cdot}{X}}(\mathcal{S})$ denotes the
 diameter of $\mathcal{S}$ with respect to $\decnorm{\cdot}{X}$.  The
 set of $k$-sparse vectors $x \in \C^d$ with $\twonorm{x} \leq 1$ is
 denoted $T_k$.
 In addition to $O(\cdot)$ notation, for two functions $f, g$, we use
 the shorthand $f \lesssim g$ (resp. $\gtrsim$) to indicate
 that $f \leq Cg$ (resp. $\geq$) for some absolute constant $C$.  We
 use $f \eqsim g$ to mean $cf \leq g \leq Cf$ for some constants
 $c, C$.
 For clarity, we have made no attempt to optimize the values of the
 constants in our analyses.

Once we define the randomized construction of our RIP matrix $\Phi$, we will
control $|\| \Phi x\|_2^2 - \|x\|_2^2|$ uniformly over $T_k$, and thus will
need some tools for controlling the supremum of a stochastic process on a
compact set.  For a metric space $(T,d)$, the \em $\delta$-covering number \em
$\mathcal{N}(T, d, \delta)$ is the size of the smallest $\delta$-net of $T$ with
respect to the metric $d$. 
One way to control a stochastic process on $T$ is simply to union bound over a sufficiently fine net of $T$;
a more powerful way to control stochastic processes, due to Talagrand, is through the $\gamma_2$ functional~\cite{Talagrand05}.

\begin{definition}
For a metric space $(T,d)$, an \em admissible sequence \em of $T$ is a sequence of nets $A_1, A_2, \ldots$ of $T$ so that $|A_n| \leq 2^{2^n}$.  Then 
\[ \gamma_2(T, d) := \inf \sup_{t \in T} \sum_{n=1}^\infty 2^{n/2} d(A_n, t),\]
where the infimum is taken over all admissible sequences $\{A_n\}$.
\end{definition}

Intuitively, $\gamma_2(T,d)$ measures how ``clustered'' $T$ is with respect to $d$: if $T$ is very clustered, then the union bound over nets above can be improved by a chaining argument. 
A similar idea is used in Dudley's integral inequality~\cite[Theorem 11.1]{lt:1991}, and indeed they are related (see~\cite{Talagrand05}, Section 1.2) by
\begin{equation}\label{eq:dudleyize}
\gamma_2(T,d) \lesssim \int_0^{\text{diam}_d(T)} \sqrt{\log
  \mathcal{N}(T,d, u)}\,du.
\end{equation}
It is this latter form that will be useful to us.

\subsection{Organization}

In Section \ref{sec:tech} we define our construction and give an overview of our techniques.
We also state our most general theorem, Theorem \ref{thm:maingen}, which gives a recipe
for turning a ``good'' ensemble of RIP matrices into an ensemble of RIP matrices with fewer rows.
In Section \ref{sec:applications}, we apply Theorem \ref{thm:maingen} to obtain the results listed in Figure \ref{fig:results}. 
Finally, we prove Theorem \ref{thm:maingen} in Sections \ref{sec:outline} and \ref{sec:covering}.

\section{Technical Overview}
\label{sec:tech}
Our construction is actually a general method for turning any ``good''
RIP matrix with a suboptimal number of rows
into an RIP matrix with fewer rows.  Many previous constructions of RIP matrices involve beginning with an 
appropriately structured matrix (a DFT or Hadamard matrix, or a
circulant matrix, for example), and 
keeping only a subset of the rows.
In this work we propose a simple twist on this idea: each row of our
new matrix is a linear 
combination of a small number of rows from the original matrix, with
random sign flips as the coefficients.
Formally, we define our construction as follows.

Let $\mathcal{A}_M$ be a distribution on $M \times d$ matrices,
defined for all $M$, and fix parameters $m$ and $B$.  Define the
injective function $h:[m] \times [B] \to [mB]$ as $h(b, i) = B(b-1) +
i$ to partition $[mB]$ into $m$ buckets of size $B$, so $h(b, i)$
denotes the $i^{th}$ element in bucket $b$.  We draw a matrix $A$ from
$\mathcal{A}_{mB}$, and then construct our $m \times d$ matrix
$\Phi(A)$ by using $h$ to hash the rows of $A$ into $m$ buckets of
size $B$.
\begin{definition}[Our construction]\label{def:construction}
  Let $\mathcal{A}_M$ be as above, and fix parameters $m$ and $B$.
  Define a new distribution on $m \times d$ matrices by constructing a
  matrix $\Phi \in \C^{m \times d}$ as follows.
\begin{enumerate}
\item Draw $A \sim \mathcal{A}_{mB}$, and let $a_i$ denote the rows of $A$.
\item For each $(b,i) \in [m] \times [B]$, 
  choose a sign $\sigma_{b,i} \in \{\pm 1\}$ independently, uniformly at random.
\item For $b=1,\ldots,m$ let 
\[\varphi_b = \sum_{i \in [B]} \sigma_{b,i} a_{h(b,i)},\] 
and let $\Phi = \Phi(A,\sigma)$ be the matrix with rows $\varphi_b$.
\end{enumerate}
We use $A_b$ to denote the $B \times d$ matrix with rows $a_{h(b,i)}$
for $i \in [B]$.
\end{definition}
Equivalently, $\Phi$ may be obtained by writing $\Phi = HA$, where $A \sim \mathcal{A}_{mB}$,
and $H$ is the $m \times mB$ random matrix
with columns indexed by $(b,i) \in [m] \times [B]$, so that 
\[ H_{j,(b,i)} = \begin{cases} \sigma_{b,i} & b=j \\ 0 &b\neq j \end{cases}.\]
Note that there are two sources of randomness in the construction of $\Phi$: there is the choice of $A \sim \mathcal{A}_{mB}$,
and also the choice of the sign flips which determine the matrix $H$.  Our RIP matrix will be the appropriately normalized matrix $\Phi / \sqrt{mB}$. 

We consider two example distributions for $\mathcal{A}_M$.  
First, we consider a bounded orthogonal ensemble. 
\begin{definition}[Bounded orthogonal ensembles]
Let $U \in \mathbb{C}^{d \times d}$ be any unitary matrix with $|U_{ij}|
\leq 1$ for all entries $U_{ij}$ of $U$.
Let $u_i$ denote the $i^{th}$ row of $U$.
A matrix $A \in \mathbb{C}^{M \times d}$ is drawn from the \em bounded orthogonal ensemble \em associated with $U$
as follows. 
Select, independently and uniformly at random, a multi-set $\Omega = \{t_1, \ldots, t_M\}$ with $t_i \in [d]$.
Then let $A \in \mathbb{C}^{M \times d}$ be the matrix with rows $u_{t_1}, \ldots, u_{t_M}$.
\end{definition}
Popular choices (and our choices) for $U$ include the $d$-dimensional
discrete Fourier transform (resulting in the \em Fourier ensemble\em),
or the $d \times d$ Hadamard matrix, both of which support $O(d\log
d)$ time matrix-vector multiplication.

The second family we consider is the partial circulant ensemble.
\begin{definition}[Partial Circulant Ensemble]
For $z \in \mathbb{C}^d$, the \em circulant matrix \em $H_z \in \mathbb{C}^{d \times d}$
is given by $H_z x = z * x$, where $*$ denotes convolution.
Fix $\Omega \subset [d]$ of size $M$ arbitrarily.
A matrix $A$ is drawn from the \em partial circulant ensemble \em as follows. 
Choose $\eps \in \{\pm 1\}^d$ uniformly at random, and let $A$ be the 
rows of $H_\eps$ indexed by $\Omega$. 
\end{definition}

As long as the original matrix ensemble $\mathcal{A}$ supports fast matrix-vector multiplication, so does
the resulting matrix
$\Phi$. 
Indeed, writing $\Phi x = HAx$ as above, we observe that there are $mB$ nonzero entries in $H$, so computing the product $HAx$ takes time $O(mB)$, plus the time it takes to compute $Ax$.  When $A$ is drawn from the partial Fourier ensemble, $Ax$ may be computed in time $O(d\log d)$ via the fast Fourier transform.  We will choose $B = \polylog(d)$, and so $\Phi x$ may be computed in time $O(d\log d + m\polylog d)$.
When $A$ is the partial circulant ensemble, $Ax$ may be computed in time $d\log(mB)$ by breaking it up into $d/(mB)$ blocks, each of which is a $mB \times mB$ Toeplitz matrix supporting matrix-vector multiplication in time $O(mB\log(mB))$.  
Thus, in this case $\Phi x$ may be computed in time $O(d\log(mB) + mB) = O(d\log m) + m\polylog d$.

Having established the ``multiplication time" column of \Figure{results}, we 
turn to the more difficult task of establishing the bounds on $m$, the number of rows.
We note that $\Phi/\sqrt{mB}$ has the $(\eps, k)$-RIP if and only if
\[ \sup_{x \in T_k} \abs{ \frac{1}{mB}\twonorm{\Phi x}^2 - \twonorm{x}^2 } \leq \eps,\]
and so our goal will be to establish bounds on $\sup_{x \in T_k} \abs{ \twonorm{\Phi x}^2/(mB) - \twonorm{x}^2}$.
We will show that if $\mathcal{A}$ satisfies certain properties, then 
in expectation this quantity is small.
Specifically we require the following two conditions.
First, we require a random matrix from $\mathcal{A}$ to have the
RIP with a reasonable, though perhaps suboptimal, number of rows:
\begin{equation}\tag{$\star$}\label{eq:niceA}
 \E_{A \sim \mathcal{A}} \sup_{x \in T_k} \left| \frac{1}{M} \twonorm{A
     x}^2 - \twonorm{x}^2 \right| \lesssim \sqrt{\frac{L}{M}}
\end{equation}
for some quantity $L$, for suitably large $M > M_0$.

Second, the matrices $A_b$ whose rows are the rows of $A$ indexed by
$h(b,i)$ for $i \in [B]$ should be well-behaved.
Define~\eqref{eq:good} to be the event that
\begin{equation}\tag{$\dag$}\label{eq:good}
\max_{b \in [m]} \sup_{x \in T_s} \twonorm{ A_b x } \leq \ell(s)
\end{equation}
for some function $\ell(s)$ and all $s \leq 2k$.  We require
that~\eqref{eq:good} happen with constant probability:
\begin{equation}\tag{$\star\star$}\label{eq:concentration}
\Pr_{A \sim \mathcal{A}} \left[ \text{\eqref{eq:good} holds}  \right] \geq 7/8.
\end{equation}
for some sufficiently small function $\ell$.

As long as these two requirements on $\mathcal{A}$ are satisfied, 
and all matrices in the support of $\mathcal{A}$ have entries of
bounded magnitude, the
construction of Definition \ref{def:construction}
yields a RIP matrix, with appropriate parameters.  The
following is
our most general theorem.
\begin{theorem}\label{thm:maingen}
Fix $\eps \in (0,1)$, and fix integers $m$ and $B$. 
Let $\mathcal{A} = \mathcal{A}_{mB}$ be a distribution on $mB
\times d$ matrices so that $\inftynorm{a_i} \leq 1$ almost surely for
all rows $a_i$ of $A \sim \mathcal{A}$.
Suppose  that \eqref{eq:niceA} holds with
\[ L \leq mB \eps^2,\]
and $M = mB > M_0$.
Suppose further that
 \eqref{eq:concentration} holds, with 
\[ \ell(s) \leq Q_1 \sqrt{B} + Q_2 \sqrt{s}\]
and that \[ B \geq \max\{ Q_2^2 \log^2 m, Q_1^2 \log m
\log k \}, \hbox{ and }  k \geq Q_1^2 \log^2 m . \]
Finally, suppose that $m > m_0$, for
\[ m_0 = \frac{k\log d\log^2(Bk) }{\eps^2}.\]
Let $\Phi$ be drawn from the distribution of Definition
\ref{def:construction}.  Then
\[
\sup_{x \in T_k} \left| \frac{1}{mB} \twonorm{\Phi x}^2 -
  \twonorm{x}^2 \right| \lesssim \eps,
\]
that is, $\frac{1}{\sqrt{mB}}\Phi$ satisfies the $(O(\eps),k)$-RIP, with $3/4$ probability.
\end{theorem}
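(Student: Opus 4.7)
My plan is to split the error and control the two pieces separately:
\[
\frac{1}{mB}\twonorm{\Phi x}^2 - \twonorm{x}^2 \;=\; \frac{1}{mB}\bigl(\twonorm{\Phi x}^2 - \twonorm{Ax}^2\bigr) \;+\; \Bigl(\frac{1}{mB}\twonorm{Ax}^2 - \twonorm{x}^2\Bigr).
\]
The second summand is handled directly by hypothesis \eqref{eq:niceA}: its supremum over $T_k$ has expectation $\lesssim \sqrt{L/(mB)}\leq \eps$, so by Markov's inequality it is $O(\eps)$ uniformly over $T_k$ with probability at least $15/16$, and on this event $\sup_{x\in T_k}\twonorm{Ax}\lesssim \sqrt{mB}$ as a byproduct.

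The bulk of the work is the first summand, which I would control using only the randomness of $\sigma$, conditioning on $A$. The key observation is that $\Phi x = V(x)\sigma$, where $V(x)\in\C^{m\times mB}$ has its $b$-th row equal to $A_bx\in\C^B$ placed at coordinates $\{h(b,i)\}_{i=1}^B$ (and zero elsewhere). Since the rows of $V(x)$ have pairwise disjoint support, $\fronorm{V(x)}^2 = \twonorm{Ax}^2$ and $\opnorm{V(x)} = \max_b \twonorm{A_bx}$. Thus the first summand equals $\frac{1}{mB}\bigl|\twonorm{V(x)\sigma}^2 - \fronorm{V(x)}^2\bigr|$, a second-order Rademacher chaos indexed by $\mathcal{V}:=\{V(x):x\in T_k\}$, and this is precisely the form addressed by the Krahmer--Mendelson--Rauhut suprema-of-chaos bound of \cite{KMR12}, which yields
\[
\E_\sigma \sup_{V\in\mathcal{V}}\bigl|\twonorm{V\sigma}^2 - \fronorm{V}^2\bigr| \lesssim \gamma_2(\mathcal{V},\opnorm{\cdot})^2 + \gamma_2(\mathcal{V},\opnorm{\cdot})\cdot d_{\fronorm{\cdot}}(\mathcal{V}) + d_{\fronorm{\cdot}}(\mathcal{V})\cdot d_{\opnorm{\cdot}}(\mathcal{V}),
\]
together with Bernstein-type tails controlling the deviation by another factor of $\eps$ with constant probability.

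On the event \eqref{eq:good}, $d_{\opnorm{\cdot}}(\mathcal{V}) = \sup_{x\in T_k}\max_b\twonorm{A_b x} \leq \ell(k) \leq Q_1\sqrt{B}+Q_2\sqrt{k}$; combined with the bound on term~(II) above, $d_{\fronorm{\cdot}}(\mathcal{V})\lesssim \sqrt{mB}$. Dividing through by $mB$, the $d_{\fronorm{\cdot}}\cdot d_{\opnorm{\cdot}}/(mB)$ contribution is $\lesssim (Q_1\sqrt{B}+Q_2\sqrt{k})/\sqrt{mB}$, which is $O(\eps)$ as soon as $m>m_0$ combined with $B\gtrsim Q_2^2\log^2 m$ and $k\gtrsim Q_1^2\log^2 m$.

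The main obstacle is estimating $\gamma_2(\mathcal{V},\opnorm{\cdot})$. Because $\opnorm{V(x)-V(x')} = \max_b\twonorm{A_b(x-x')} =: d_V(x,x')$, this reduces to $\gamma_2(T_k,d_V)$, which I would bound via Dudley's integral \eqref{eq:dudleyize} using two complementary Lipschitz bounds on $d_V$. First, on the event \eqref{eq:good}, $d_V(x,x')\leq \ell(2k)\twonorm{x-x'}$, which combined with the standard volumetric estimate $\log\mathcal{N}(T_k,\twonorm{\cdot},u)\lesssim k\log(d/k)+k\log(1/u)$ (coming from covering the $\binom{d}{k}$ coordinate unit Euclidean balls whose union is $T_k$) controls the small-scale part of the integral. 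Second, the entrywise bound $\inftynorm{a_i}\leq 1$ gives $\twonorm{A_bz}\leq \sqrt{B}\decnorm{z}{1}$, so that $d_V(x,x')\leq \sqrt{B}\decnorm{x-x'}{1}$, and applying a Maurey-type empirical covering of $k$-sparse vectors in the $\ell_1$-ball by $O(\log d)$-sparse vectors controls the large-scale part. Balancing the two regimes, the hypotheses $B\gtrsim Q_1^2\log m\log k$, $k\gtrsim Q_1^2\log^2 m$, and $m\gtrsim k\log d\log^2(Bk)/\eps^2$ are exactly the bookkeeping needed to produce $\gamma_2(T_k,d_V)\lesssim \eps\sqrt{mB}$, making both $\gamma_2^2/(mB)$ and $\gamma_2\cdot d_{\fronorm{\cdot}}/(mB)$ at most $O(\eps)$. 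A final union bound over the three constant-probability events---concentration of term~(II), the event \eqref{eq:good}, and the KMR chaos tail---gives overall success probability at least $3/4$.
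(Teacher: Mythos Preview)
Your overall architecture matches the paper's: the same splitting of the error, the same identification $\Phi x = V(x)\sigma$ with $\opnorm{V(x)}=\max_b\twonorm{A_bx}$, the same invocation of the Krahmer--Mendelson--Rauhut chaos bound, and the same reduction to estimating $\gamma_2(T_k,\decnorm{\cdot}{X})$ via Dudley's integral with a small-scale $\ell_2$ net and a large-scale Maurey net. So the skeleton is right.

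The gap is in the large-scale covering number estimate, which is exactly where the paper's improvement lives. Your proposed route---bound $d_V(x,x')\leq \sqrt{B}\decnorm{x-x'}{1}$ and then run a generic Maurey empirical covering---is too lossy. If you carry it out, the quantity you must control inside Maurey is $\mathcal{G}(x)=\E_{g,Z}\max_b\twonorm{A_b\sum_i g_iZ_i}$, and either bounding $\max_b$ by a union bound (paying $\sqrt{\log m}$, as in~\cite{RV08}) or passing through the $\ell_1$ norm (paying a factor depending on the sparsity of $\sum Z_i$) gives $\mathcal{G}(x)\lesssim \sqrt{sB\log m}$ at best. This yields $\sqrt{\log\mathcal{N}(T_k,\decnorm{\cdot}{X},u)}\lesssim \sqrt{kB\log m\log d}/u$ and hence $\gamma_2^2/(mB)\lesssim k\log m\log d\log^2(Bk)/m$, which is a $\log m$ factor too large to meet the hypothesis $m>m_0$. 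In particular, your sketch never invokes \eqref{eq:good} inside the Maurey step, yet you correctly note that the hypotheses on $B$ and $k$ are ``exactly the bookkeeping'' for the $\gamma_2$ bound---so something is missing.

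What the paper actually does (Section~\ref{sec:covering}) is apply Maurey in the norm $\decnorm{\cdot}{X}$ directly and then split $x\in T_k/\sqrt{k}$ into a ``spiky'' part $x_L$ (coordinates larger than $(\log m)/k$, so $\decnorm{x_L}{1}\leq 1/\log m$) and a ``flat'' part $x_{\overline L}$. For $x_L$ the Frobenius estimate $\mathcal{G}(x_L)\lesssim\sqrt{Bs\decnorm{x_L}{1}\log m}\leq\sqrt{Bs}$ absorbs the union-bound loss. For $x_{\overline L}$ the paper bounds $\opnorm{A_b\mathbf{Z}}\leq \opnorm{A_b|_{\supp Z}}\cdot\inftynorm{Z}^{1/2}$ and then uses \eqref{eq:good} to control $\opnorm{A_b|_{\supp Z}}\leq \ell(\min(k,s))$; the hypotheses $B\geq Q_2^2\log^2 m$, $B\geq Q_1^2\log m\log k$, and $k\geq Q_1^2\log^2 m$ are precisely what make this piece $\lesssim\sqrt{Bs}$ as well. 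This is the step that removes the extra $\sqrt{\log m}$ and delivers the stated $m_0$; your proposal needs to incorporate it.
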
 

In Section \ref{sec:applications}, we will show how to use Theorem \ref{thm:maingen}
to prove the results reported in \Figure{results}, but first we will outline the intuition of the proof of Theorem \ref{thm:maingen}.

By construction, the expectation of $\twonorm{\Phi x}^2$ over the sign
flips $\sigma$ is simply $\twonorm{Ax}^2$, and 
\eqref{eq:niceA} guarantees that this expectation is under
control, uniformly over $x \in T_k$.  The trick is 
that $A$ has $mB$ rows, rather than $m$, and this provides slack
to handle the fact that the guarantee \eqref{eq:niceA} 
is not optimal. 

The problem is then to argue that for all $x \in T_k$, $\twonorm{\Phi x}^2$ is close to its expectation.
The proof of Theorem \ref{thm:maingen} proceeds in two steps.  First, we condition on $A$ and control the deviation 
\begin{equation}\label{eq:tocontrol}
\E_\sigma \sup_{x \in T_k} \left| \twonorm{\Phi x}^2 - \E_\sigma \twonorm{\Phi x}^2 \right|.
\end{equation}
Second, we take the expectation with respect to $A \sim \mathcal{A}_{mB}$.

In Theorem \ref{thm:boundbygamma2} 
we carry out the first step and bound the deviation
\eqref{eq:tocontrol} by Talagrand's $\gamma_2$ functional 
$\gamma_2(T_k,\decnorm{\cdot}{X})$, where $\decnorm{x}{X} := \max_b
\twonorm{A_b x}$ 
is a norm which measures the contribution to $\twonorm{\Phi x}$ of the worst bucket
$b$ of the partition function $h$.
Our strategy is to write $\twonorm{\Phi x}^2$ as $\twonorm{X(x)
  \sigma}^2$, for an appropriate matrix $X(x)$ that depends on $A$.  
Finally we use a result of Krahmer, Mendelson, and Rauhut~\cite{KMR12}
to control the Rademacher chaos, obtaining an expression 
in terms of $\gamma_2(T_k, \decnorm{\cdot}{X})$. 

In the second step, we unfix $A$, and $\gamma_2(T_k,
\decnorm{\cdot}{X})$ becomes a random variable.  In Theorem
\ref{thm:covering}, we show that, as long as \eqref{eq:concentration}
holds, $\gamma_2(T_k,\decnorm{\cdot}{X})$ is small with high
probability over the choice of $A \sim \mathcal{A}_{mB}$.  By
\eqref{eq:dudleyize}, it is sufficient to bound the covering numbers
$\mathcal{N}(T_k, \decnorm{\cdot}{X}, u)$. This is similar
to~\cite{RV08}, which must bound the same $\mathcal{N}(T_k,
\decnorm{\cdot}{X}, u)$ but in a setting where $B = 1$.  Both papers
use Maurey's empirical method to relate the covering number to
$\E[\max_b \decnorm{A_bg}{2}]$ for a Gaussian process $g$.  But
while~\cite{RV08} loses a $\sqrt{\log m}$ factor in a union bound over
$b$, we only lose a constant factor as long as $B \geq \polylog d$.
This difference is what gives our $\log k$ improvement in $m$.
It is also the most technical piece of our proof, and
is presented in Section \ref{sec:covering}.  

Finally, we put all of the pieces together.
As long as $mB$ is large enough and the condition \eqref{eq:niceA} holds,
$\E_\sigma \twonorm{\Phi x}/\sqrt{mB}$ will be close to
$\twonorm{x}$ in
expectation over $A$.  At the same time
as long as the condition \eqref{eq:concentration} holds, 
the deviation \eqref{eq:tocontrol} is small in expectation over $A \sim \mathcal{A}_{mB}$.
Choosing $B$ appropriately controls the restricted isometry constant
of $\Phi$, at the cost of slightly increasing the embedding time.


\section{Main Results}\label{sec:applications}
Before we prove Theorem \ref{thm:maingen}, let us show how we may use it to conclude the results in \Figure{results}.
To do this, we must compute $L$ and $\ell(s)$ from the conditions \eqref{eq:niceA} and \eqref{eq:concentration}, 
when
$\mathcal{A}$ is the Fourier ensemble (or any bounded orthogonal ensemble), and when $\mathcal{A}$ is 
the partial circulant ensemble.

\subsection{Bounded orthogonal ensembles}
\label{sec:fourier}
Suppose $\mathcal{A}$ is a bounded orthogonal ensemble.
The RIP analysis of~\cite{RV08,CGV13} shows 
\[ \E_{A \sim \mathcal{A}} \sup_{x \in T_k} \left| \frac{1}{M}
  \twonorm{A x}^2 - \twonorm{x}^2 \right| \lesssim \sqrt{\frac{ k
      \log^3 k \log d}{M} },\]
provided that $M \gtrsim k\log^3 k \log d$,
so we may take $L \lesssim k \log^3 k \log d$.
Further, the analysis of~\cite{RV08} (see Lemma \ref{lemma:largeRV})
implies that
\[\Pr_{A \sim \mathcal{A}} \left[  \exists s\in[2k] :  \max_{b \in [m]} \sup_{x \in T_s}
  \twonorm{ A_b x } \geq \ell(s) \right] 
  \leq 2km \max_{s\in[2k]} \Pr_{A \sim \mathcal{A}} \left[  \sup_{x \in T_s}
  \twonorm{ A_1 x } \geq \ell(s) \right] \leq 1/8\]
when 
\[ \ell(s) \eqsim  \log^{1/4}(m)\sqrt{B} + \log^{1/4}(m) \sqrt{s \log^2 (k) \log(d) \log(B)}.\]
Thus, we may take $Q_1 \lesssim \log^{1/4} m$ and $Q_2 \lesssim \log^{1/4}(m) \log(k) \sqrt{\log(d)\log(B)} \lesssim \log^{2.5}(d)$
With these parameter settings,
Theorem \ref{thm:maingen} implies the
following theorem.
\begin{theorem}\label{thm:fourier}
  Let $\eps \in (0,1)$.  Let $\mathcal{A}$ be a bounded orthogonal
  ensemble (for example, the Fourier ensemble), and suppose that
  $\Phi$ is as in Definition \ref{def:construction}.  Further suppose
  $B \geq \log^{6.5} d$ and $k \geq \log^{2.5} m$.  Then for some
  value
  \[ m = O\left( \frac{ k \log d \log^2 (k\log d)}{\eps^2} \right), \]
  we have that
  \[ \sup_{x\in T_k}\left| \twonorm{\Phi x}^2 - \twonorm{x}^2 \right| \leq \eps \]
  with $3/4$ probability.
\end{theorem}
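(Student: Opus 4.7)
The plan is to derive Theorem~\ref{thm:fourier} as a direct application of the general framework Theorem~\ref{thm:maingen}, with the main work being to verify the hypotheses~\eqref{eq:niceA} and~\eqref{eq:concentration} for the bounded orthogonal ensemble and to choose the parameters $L$, $Q_1$, $Q_2$, $B$, $m$ so that the resulting inequalities line up with the stated bound on $m$.

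First I would establish~\eqref{eq:niceA}. The RIP analysis of Rudelson--Vershynin~\cite{RV08}, in the form sharpened by Cheraghchi--Guruswami--Velingker~\cite{CGV13}, gives
\[
\E_{A \sim \mathcal{A}} \sup_{x \in T_k} \left| \tfrac{1}{M} \twonorm{Ax}^2 - \twonorm{x}^2 \right| \lesssim \sqrt{\tfrac{k \log^3 k \log d}{M}}
\]
for $M \gtrsim k \log^3 k \log d$, which lets us take $L \lesssim k \log^3 k \log d$. The constraint $L \leq mB\eps^2$ then translates into $mB \gtrsim k \log^3 k \log d / \eps^2$, which will be implied by the target $m \gtrsim k \log d \log^2(k \log d)/\eps^2$ once $B \geq \log^{6.5} d$.

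Next I would verify~\eqref{eq:concentration}. Each $A_b$ is itself a $B \times d$ bounded orthogonal submatrix (since $A$ was built by subsampling rows of a bounded orthonormal system); so a single-bucket bound from the analysis of~\cite{RV08} (cited here as Lemma~\ref{lemma:largeRV}) yields, for every fixed $s \leq 2k$ and $b \in [m]$, a tail bound of the form $\Pr[\sup_{x \in T_s} \twonorm{A_b x} \geq \ell(s)] \leq 1/(16km)$ provided
\[
\ell(s) \eqsim \log^{1/4}(m)\sqrt{B} + \log^{1/4}(m)\sqrt{s \log^2 k \log d \log B}.
\]
A union bound over $b \in [m]$ and $s \in [2k]$ then gives~\eqref{eq:concentration} with $Q_1 \lesssim \log^{1/4} m$ and $Q_2 \lesssim \log^{1/4}(m) \log k \sqrt{\log d \log B} \lesssim \log^{2.5} d$ (using $B, m, k \leq \poly(d)$).

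Finally I would check the remaining hypotheses of Theorem~\ref{thm:maingen} and conclude. The condition $k \geq Q_1^2 \log^2 m \lesssim \log^{2.5} m$ is exactly the restriction in the theorem statement. The condition $B \geq \max\{Q_2^2 \log^2 m,\, Q_1^2 \log m \log k\}$ reduces, after plugging in the bounds on $Q_1, Q_2$ and $\log B \lesssim \log\log d$, to $B \gtrsim \log^{5.5}(d) \log\log d$, which is comfortably implied by $B \geq \log^{6.5} d$. The sample-complexity condition $m > m_0 = k \log d \log^2(Bk)/\eps^2$ matches the claimed $m = O(k \log d \log^2(k \log d)/\eps^2)$ because $\log(Bk) \eqsim \log(k \log d)$ whenever $B \leq \polylog(d)$. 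With all hypotheses satisfied, Theorem~\ref{thm:maingen} yields $\sup_{x \in T_k}\bigl|\frac{1}{mB}\twonorm{\Phi x}^2 - \twonorm{x}^2\bigr| \lesssim \eps$ with probability $3/4$, which after rescaling $\eps$ by a constant gives the stated bound. The main (minor) obstacle is really the bookkeeping to confirm that the polylogarithmic losses from the single-bucket estimate and the union bound for~\eqref{eq:concentration} get absorbed into the $B \geq \log^{6.5} d$ assumption; there is no new probabilistic content beyond invoking Theorem~\ref{thm:maingen} and the cited single-matrix RIP / norm-tail estimates.
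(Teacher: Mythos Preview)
Your proposal is correct and follows essentially the same route as the paper: verify~\eqref{eq:niceA} via the Rudelson--Vershynin/Cheraghchi--Guruswami--Velingker bound to get $L \lesssim k\log^3 k\log d$, verify~\eqref{eq:concentration} via Lemma~\ref{lemma:largeRV} and a union bound over $b\in[m]$ and $s\in[2k]$ to extract $Q_1\lesssim\log^{1/4}m$ and $Q_2\lesssim\log^{2.5}d$, then check that $B\ge\log^{6.5}d$ and $k\ge\log^{2.5}m$ absorb the parameter constraints of Theorem~\ref{thm:maingen}. The bookkeeping you outline (in particular $Q_2^2\log^2 m\lesssim\log^{5.5}(d)\log\log d$ and $\log(Bk)\eqsim\log(k\log d)$ for $B$ polylogarithmic) matches the paper's computations.
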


\subsection{Circulant Matrices}
\label{sec:circulant}
Suppose that $\mathcal{A}$ is the partial circulant ensemble.
By the analysis in~\cite{KMR12}, 
\[ \E_{A \sim \mathcal{A}} \sup_{x \in T_k} \left| \frac{1}{M}
  \twonorm{A x}^2 - \twonorm{x}^2 \right| \lesssim \sqrt{\frac{ k
    \log^2 k \log^2 d }{M}},\]
for $M \gtrsim k\log^2 k\log^2 d$.
Concentration also follows from the analysis in~\cite{KMR12}, as a
corollary of Theorem \ref{thm:kmr} (see \cite[Theorem 4.1]{KMR12}). 
\begin{lemma} (Implicit in~\cite{KMR12})\LemmaName{circ-conc}
\[\Pr_{A \sim \mathcal{A}} \left[  \exists s\in[2k] :  \max_{b \in [m]} \sup_{x \in T_s}
  \twonorm{ A_b x } \geq \ell(s) \right] \leq \frac{1}{8}\]
when 
\[ \ell(s) \eqsim \sqrt{B} + \sqrt{s} \log k \log d.\]
\end{lemma}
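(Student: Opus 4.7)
The plan is to apply the Krahmer--Mendelson--Rauhut bound (Theorem \ref{thm:kmr}) to each submatrix $A_b$ individually, and then union-bound over $b \in [m]$ and $s \in [2k]$. The key observation is structural: since $A$ is drawn from the partial circulant ensemble, each $A_b$ is itself a partial circulant matrix with $B$ rows (its rows are a fixed subset of the rows of $H_\eps$), so Theorem 4.1 of \cite{KMR12} applies to it directly.

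Fix $b \in [m]$ and $s \in [2k]$. Theorem \ref{thm:kmr} applied to the $B \times d$ partial circulant matrix $A_b$ yields a tail bound of the form
\[
\Pr\!\left[\sup_{x \in T_s} \bigl|\twonorm{A_b x}^2 - B\twonorm{x}^2\bigr| \;\gtrsim\; \sqrt{s B}\,\log s\,\log d + s\log^2 s\,\log^2 d \cdot t\right] \leq e^{-t}
\]
for any $t \geq 1$ (this is how the KMR bound is packaged: a $\gamma_2$-type main term plus a subgaussian tail). Choosing $t \eqsim \log(km)$ makes the failure probability at most $1/(16km)$. On this event,
\[
\sup_{x \in T_s} \twonorm{A_b x}^2 \;\leq\; B + C_1\sqrt{sB}\,\log s\,\log d + C_2\, s\log^2 k\,\log^2 d \cdot \log(km).
\]
Using $2\sqrt{sB}\log s\log d \leq B/\log k + s\log^2 s\log^2 d \log k$ (AM-GM) and absorbing the $\log(km)$ factor (recall $m$ may be polynomial in $d$, so $\log(km) \lesssim \log d$, and we are willing to pay a $\log k \log d$ factor squared), we can consolidate to
\[
\sup_{x \in T_s} \twonorm{A_b x}^2 \;\leq\; 2B + C_3\, s\,\log^2 k\,\log^2 d,
\]
which after taking square roots gives $\sup_{x \in T_s} \twonorm{A_b x} \lesssim \sqrt{B} + \sqrt{s}\log k \log d$, i.e.\ the desired $\ell(s)$.

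Finally, I union-bound this event over all $b \in [m]$ and $s \in \{1,2,\ldots,2k\}$ (at most $2km$ bad events), each failing with probability at most $1/(16km)$, so the total failure probability is at most $1/8$. This gives the statement of the lemma.

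The main obstacle, such as it is, is bookkeeping: one has to make sure that the tail parameter $t$ fed into Theorem \ref{thm:kmr} is large enough to survive the union bound over $(b,s)$ pairs, while still being small enough that the resulting deviation $\sqrt{s}\log k \log d$ absorbs all $\log$ factors (including the extra $\log(km)$ coming from the union bound) into the single $\log k \log d$ factor stated in $\ell(s)$. No additional ideas beyond the direct application of \cite[Theorem 4.1]{KMR12} are needed, which is why the lemma is only implicit there.
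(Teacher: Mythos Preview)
Your approach is correct and matches what the paper intends: it states only that the lemma is ``implicit in \cite{KMR12}'' and follows from the tail bound in Theorem~\ref{thm:kmr} applied as in \cite[Theorem~4.1]{KMR12}, which is exactly the argument you spell out --- each $A_b$ is itself a $B\times d$ partial circulant matrix, so the KMR concentration inequality applies to it, and a union bound over $b\in[m]$ and $s\in[2k]$ finishes. One small presentational point: the tail in Theorem~\ref{thm:kmr} is Bernstein-type ($\exp(-\min\{t^2/V^2,\,t/U\})$), not the pure $e^{-t}$ form you wrote, so the deviation picks up both a $V\sqrt{\log(km)}$ and a $U\log(km)$ contribution; with $V\lesssim\sqrt{s}(\sqrt{s}\log s\log d+\sqrt{B})$ and $U\lesssim s$ these are still absorbed into $B+s\log^2 k\log^2 d$ as you claim, so the conclusion is unaffected.
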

Thus, we may take $Q_1 \lesssim 1$ and $Q_2 \lesssim \log k\log d$. Then Theorem \ref{thm:maingen} implies the following theorem.
\begin{theorem}\label{thm:circulant}
  Let $\eps \in (0,1)$.  Let $\mathcal{A}$ be the partial circulant
  ensemble, and suppose $\Phi$ is constructed as in Definition
  \ref{def:construction}.  Further suppose $B \geq \log^2 m\log^2
  k\log^2 d$ and $k \geq \log^2 m$.  Then, for some value
  \[m = O\left( \frac{k \log d \log^2 (k\log d)}{\eps^2} \right), \] we
  have that, as long as $m < d/B$,
  \[ \sup_{x\in T_k} \left| \twonorm{\Phi x}^2 - \twonorm{x}^2
  \right| \leq \eps \]
  with $3/4$ probability.
\end{theorem}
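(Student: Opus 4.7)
The plan is to apply the general Theorem~\ref{thm:maingen} with $\mathcal{A}$ being the partial circulant ensemble, verifying each of its hypotheses using the cited results of~\cite{KMR12} and then choosing $B$ and $m$ to satisfy all the constraints simultaneously.

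First I would check the boundedness hypothesis: each row of a partial circulant matrix drawn from this ensemble is (a shift of) $\varepsilon \in \{\pm 1\}^d$, so $\inftynorm{a_i} = 1$ almost surely. Next I would verify condition~\eqref{eq:niceA} by invoking the in-expectation RIP bound of~\cite{KMR12} stated in the subsection above, which gives $L \lesssim k\log^2 k \log^2 d$, so the requirement $L \leq mB\eps^2$ translates to $mB \gtrsim k\log^2 k\log^2 d/\eps^2$. Then I would verify condition~\eqref{eq:concentration} by invoking Lemma~\ref{lem:circ-conc} directly, which yields $\ell(s) \lesssim \sqrt{B} + \sqrt{s}\log k \log d$, i.e., $Q_1 = O(1)$ and $Q_2 = O(\log k \log d)$.

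With these constants in hand, the remaining task is bookkeeping. The constraint $B \geq \max\{Q_2^2 \log^2 m, Q_1^2 \log m \log k\}$ reduces to $B \gtrsim \log^2 m\log^2 k\log^2 d$, and the constraint $k \geq Q_1^2\log^2 m$ reduces to the assumed $k \geq \log^2 m$. I would then take $B$ to be the smallest value satisfying both lower bounds (giving $B = \Theta(\log^2 m\log^2 k\log^2 d)$, consistent with the theorem's hypothesis $B \geq \log^2 m\log^2 k\log^2 d$). Since $Bk \leq k\cdot\log^2 m\log^2 k\log^2 d$ and $\log m \lesssim \log d$ (we are in the regime $m < d/B \leq d$), we get $\log(Bk) \lesssim \log(k\log d)$, so $m_0 = k\log d\log^2(Bk)/\eps^2 = O(k\log d\log^2(k\log d)/\eps^2)$. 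Choosing $m$ to be a sufficiently large constant multiple of this quantity satisfies $m > m_0$. The additional hypothesis $m < d/B$ is needed to make the partial circulant ensemble well-defined (the index set $\Omega$ of size $mB$ must satisfy $mB \leq d$), and is stated in the theorem.

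Having checked every hypothesis of Theorem~\ref{thm:maingen}, the conclusion $\sup_{x\in T_k}\left|\twonorm{\Phi x}^2/(mB) - \twonorm{x}^2\right| \lesssim \eps$ follows directly with probability at least $3/4$, and rescaling $\eps$ by the implicit constant gives the stated bound. There is no real obstacle, since the heavy lifting is entirely contained in Theorem~\ref{thm:maingen} and in the cited bounds from~\cite{KMR12}; the only mild subtlety is the simplification $\log(Bk) \lesssim \log(k\log d)$, which requires tracking that $B$ is polylogarithmic in $d$ so that $\log B$ does not blow up the final bound.
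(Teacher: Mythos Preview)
Your proposal is correct and follows essentially the same approach as the paper: verify the hypotheses of Theorem~\ref{thm:maingen} by plugging in the \cite{KMR12} expectation bound for \eqref{eq:niceA} (giving $L \lesssim k\log^2 k\log^2 d$) and Lemma~\ref{lem:circ-conc} for \eqref{eq:concentration} (giving $Q_1 \lesssim 1$, $Q_2 \lesssim \log k\log d$), then read off the constraints on $B$, $k$, and $m$. Your added remarks on boundedness of the rows, the need for $mB \leq d$, and the simplification $\log(Bk) \lesssim \log(k\log d)$ are all correct and make explicit details the paper leaves to the reader.
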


We remark that the condition $m \leq d/B$ does not actually effect the
results reported in \Figure{results}.  Indeed, if $mB > d$, we may
artificially increase $d$ to $d' = mB$ by embedding $T_k$ in $\C^{d'}$
by zero-padding.  Applying Theorem \ref{thm:circulant} with $d = d'$
implies an RIP matrix with $O(\eps^{-2}k\log d' \log^2 (k \log d))$ rows
and embedding time $O(d'\log d') + m\polylog d'$.  Because $B =
\polylog d$, we have $d' = d\polylog(d)$, and there is no asymptotic
loss in $m$ by extending $d$ to $d'$.  Further, in this parameter
regime, $d' \log d' = mB\log d' = m \polylog d$.

\section{Proof of Theorem \ref{thm:maingen}}
\label{sec:outline}
We will use the following theorem from~\cite{KMR12}.
\begin{theorem}{\cite[Theorem 1.4]{KMR12}}\label{thm:kmr}
Let $\mathcal{S} \subset \mathbb{C}^{m \times M}$ be a symmetric set
of matrices, $\mathcal{S} = -\mathcal{S}$.  Let $\sigma \in \{\pm 1\}^M$ uniformly at random.  Then
\[ \E \sup_{X \in \mathcal{S}} \left| \twonorm{X \sigma}^2 - \E
  \twonorm{X\sigma}^2 \right| \lesssim \left( d_F(\mathcal{S})
  \gamma_2(\mathcal{S}, \opnorm{\cdot} ) + \gamma_2^2(\mathcal{S},
  \opnorm{\cdot}) \right) =: E'.\]
Furthermore, for all $t > 0$,
\[ \Pr\left[ \sup_{X \in \mathcal{S}} \left| \twonorm{X\sigma}^2 - \E
    \twonorm{X\sigma}^2 \right| > C_1 E' + t \right] \leq
2\exp\left(-C_2\min\left\{ \frac{t^2}{V^2}, \frac{t}{U} \right\}
\right),\]
where $C_1$ and $C_2$ are constants, 
\[ V = d_{2\to 2}(\mathcal{S})( \gamma_2( \mathcal{S}, \opnorm{\cdot}
) + d_F(\mathcal{S})), \]
and
\[ U = d_{2 \to 2}^2(\mathcal{S}).\]
\end{theorem}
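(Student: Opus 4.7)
Plan: The statement is a concentration inequality for the supremum of a second-order Rademacher chaos $X\mapsto \twonorm{X\sigma}^2$ indexed by a set of matrices, and the classical route to it is "Hanson--Wright + generic chaining." I would first identify the process structure: write $f(X) := \twonorm{X\sigma}^2 - \E\twonorm{X\sigma}^2 = \sigma^*(X^*X)\sigma - \mathrm{tr}(X^*X)$, a centered second-order Rademacher chaos with symmetric coefficient matrix $A(X) := X^*X$. For any $X,Y\in\mathcal{S}$, the increment $f(X)-f(Y)$ is itself such a chaos with coefficient matrix $\Delta := A(X)-A(Y)$, so the Hanson--Wright inequality for Rademacher vectors gives the pointwise tail
\[ \Pr\bigl[|f(X)-f(Y)|>u\bigr] \;\leq\; 2\exp\!\Bigl(-c\min\bigl(u^2/\fronorm{\Delta}^2,\; u/\opnorm{\Delta}\bigr)\Bigr). \]
To turn this into an increment bound in a single indexing metric, factor $\Delta = X^*(X-Y)+(X-Y)^*Y$; combined with the symmetry $\mathcal{S}=-\mathcal{S}$ (which gives $\fronorm{Z}\le d_F(\mathcal{S})$ and $\opnorm{Z}\le d_{2\to 2}(\mathcal{S})$ for $Z\in\mathcal{S}$), this yields
\[ \fronorm{\Delta}\lesssim \opnorm{X-Y}\cdot d_F(\mathcal{S}),\qquad \opnorm{\Delta}\lesssim \opnorm{X-Y}\cdot d_{2\to 2}(\mathcal{S}), \]
so $f$ has mixed sub-Gaussian/sub-exponential increments, both driven by $\opnorm{X-Y}$ rescaled by a diameter constant.

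Next I would run a generic chaining argument against this mixed tail. Fix an admissible sequence $(T_n)_{n\ge 0}$ in $(\mathcal{S},\opnorm{\cdot})$ that realizes $\gamma_2(\mathcal{S},\opnorm{\cdot})$ up to a constant and let $\pi_n:\mathcal{S}\to T_n$ be the nearest-point maps. Apply the chaining decomposition
\[ f(X)-f(\pi_0(X))=\sum_{n\ge 0}\bigl(f(\pi_{n+1}(X))-f(\pi_n(X))\bigr), \]
feed the two-parameter tail from the previous paragraph into each link, and union-bound at level $n$ over the $|T_n|\cdot|T_{n+1}|\le 2^{2^{n+2}}$ pairs. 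The sub-Gaussian half of the tail sums, by the usual $\sum_n 2^{n/2}\opnorm{\pi_{n+1}(X)-\pi_n(X)}$ argument, to a contribution of $d_F(\mathcal{S})\cdot\gamma_2(\mathcal{S},\opnorm{\cdot})$; the sub-exponential half yields a $d_{2\to 2}(\mathcal{S})\cdot\gamma_1$-type sum $\sum_n 2^n\opnorm{\pi_{n+1}(X)-\pi_n(X)}$ which, because the underlying metric is still $\opnorm{\cdot}$ (only rescaled by a diameter), can be absorbed into $\gamma_2^2(\mathcal{S},\opnorm{\cdot})$. Together these give the announced expectation bound $E'=d_F(\mathcal{S})\gamma_2(\mathcal{S},\opnorm{\cdot})+\gamma_2^2(\mathcal{S},\opnorm{\cdot})$.

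For the deviation inequality I would rerun the same chaining, keeping a free parameter $t$ distributed across the levels geometrically as in Dirksen's refinement of Talagrand's tail bound. At each level $n$ the Hanson--Wright tail produces an additive slack whose sub-Gaussian part is controlled by $\opnorm{X-Y}\cdot d_F(\mathcal{S})$ and whose sub-exponential part by $\opnorm{X-Y}\cdot d_{2\to 2}(\mathcal{S})$. Summing the slacks along the chain, the sub-Gaussian contributions combine into an $\exp(-c_2 t^2/V^2)$ tail with $V$ equal to the stated $d_{2\to 2}(\mathcal{S})(\gamma_2(\mathcal{S},\opnorm{\cdot})+d_F(\mathcal{S}))$, while the sub-exponential contributions combine into the $\exp(-c_2 t/U)$ tail with $U = d_{2\to 2}^2(\mathcal{S})$; taking their minimum after shifting by $C_1 E'$ gives the stated tail inequality.

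The hard part is Step~4: making the mixed-tail bookkeeping produce the clean combination $d_F\gamma_2+\gamma_2^2$ rather than a separate $\gamma_1(\mathcal{S},\opnorm{\cdot})$ term. The crucial observation is that both increment pseudo-metrics are scalar multiples of the same underlying metric $\opnorm{\cdot}$, which lets one trade a factor of $d_{2\to 2}(\mathcal{S})$ in the $\gamma_1$ sum for a factor of $\gamma_2(\mathcal{S},\opnorm{\cdot})$ and reexpress the sub-exponential contribution as $\gamma_2^2$. This is exactly the content that Talagrand's (or Dirksen's) general chaining theorem for processes with mixed tails packages into a single statement; the remainder of the argument is organizing the Hanson--Wright increment bound into the form required by that theorem and identifying the diameter constants.
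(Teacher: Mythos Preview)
The paper does not prove this statement. Theorem~\ref{thm:kmr} is quoted from \cite{KMR12} and used only as a black box in the proof of Theorem~\ref{thm:boundbygamma2}; there is no proof in the present paper to compare your attempt against.

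For what it is worth, your outline is in the spirit of the argument in \cite{KMR12}: write the centered process as a second-order Rademacher chaos with kernel $X^*X$, obtain Hanson--Wright-type mixed tails for the increments governed by $\fronorm{X^*X-Y^*Y}$ and $\opnorm{X^*X-Y^*Y}$, bound both through $\opnorm{X-Y}$ times a diameter via $\mathcal{S}=-\mathcal{S}$, and chain along an admissible sequence for $(\mathcal{S},\opnorm{\cdot})$. You also correctly isolate the one genuine subtlety, namely why the sub-exponential contribution produces $\gamma_2^2(\mathcal{S},\opnorm{\cdot})$ rather than $d_{2\to 2}(\mathcal{S})\cdot\gamma_1(\mathcal{S},\opnorm{\cdot})$. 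Your heuristic for this step, however---that both pseudo-metrics being scalar multiples of $\opnorm{\cdot}$ lets one ``trade $d_{2\to 2}$ for $\gamma_2$''---is not valid: the inequality $\mathrm{diam}_d(T)\cdot\gamma_1(T,d)\lesssim\gamma_2(T,d)^2$ fails in general (for the ellipsoid with semi-axes $j^{-1/2}$, $1\le j\le N$, one has $\gamma_2^2\sim\log N$ but $\mathrm{diam}\cdot\gamma_1\sim\sqrt{N}$). The proof in \cite{KMR12} obtains the $\gamma_2^2$ term by a more careful chaining that exploits the specific quadratic structure of $\twonorm{X\sigma}^2$, not via a generic metric inequality. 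Since you ultimately defer that step to the packaged theorem, your plan is still sound; only the informal explanation of why the black box delivers $\gamma_2^2$ is off.
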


The first step in proving Theorem \ref{thm:maingen} is to bound the
restricted isometry constant of $\Phi$ in terms of the $\gamma_2$
functional, removing the dependence on $\sigma$.
\begin{theorem}\label{thm:boundbygamma2}
Suppose $\mathcal{A} = \mathcal{A}_M$ is a distribution on $M
\times d$ matrices so that \eqref{eq:niceA}  
holds, and let $\Phi$ be as in Definition \ref{def:construction}. Then
\begin{equation}
\E \sup_{x \in T_k} \left| \frac{1}{mB} \twonorm{\Phi x}^2 -
  \twonorm{x}^2\right| \lesssim \frac{1}{mB} \left( \E_A \sup_{x \in T_k}
  \twonorm{Ax} \gamma_2( T_k, \decnorm{\cdot}{X} ) + \E_A \gamma_2^2(T_k,
  \decnorm{\cdot}{X}) \right) + \sqrt{ \frac{ L }{mB}}.
\end{equation}
where 
\[\decnorm{x}{X} := \max_{b \in [m]} \twonorm{A_b x}.\]
\end{theorem}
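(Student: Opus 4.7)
The plan is to decompose the quantity of interest into a ``deviation'' piece, which I would control by a Rademacher chaos bound, and a ``bias'' piece, which is exactly the RIP deviation of $A$ itself, so that \eqref{eq:niceA} applies directly.  Concretely, I would first use the triangle inequality to write
$$\sup_{x\in T_k}\left|\tfrac{1}{mB}\twonorm{\Phi x}^2 - \twonorm{x}^2\right| \leq \tfrac{1}{mB}\sup_{x\in T_k}\left|\twonorm{\Phi x}^2 - \E_\sigma\twonorm{\Phi x}^2\right| + \sup_{x\in T_k}\left|\tfrac{1}{mB}\E_\sigma\twonorm{\Phi x}^2 - \twonorm{x}^2\right|.$$
Because the $\sigma_{b,i}$ are independent mean-zero signs, a direct expansion gives $\E_\sigma\twonorm{\Phi x}^2 = \sum_b\twonorm{A_bx}^2 = \twonorm{Ax}^2$; taking $\E_A$ on the second term and invoking \eqref{eq:niceA} contributes the $\sqrt{L/(mB)}$ summand.

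Next, I would recast the first (deviation) term as a Rademacher chaos in the form required by Theorem \ref{thm:kmr}.  For each $x\in T_k$, define the block-sparse matrix $X(x)\in\C^{m\times mB}$ whose $b$-th row places the vector $A_bx\in\C^B$ into the columns indexed by the $b$-th bucket and is zero elsewhere; then
$$\twonorm{\Phi x}^2 \;=\; \sum_b\left|\langle A_bx,\sigma_b\rangle\right|^2 \;=\; \twonorm{X(x)\sigma}^2,$$
where $\sigma$ stacks all the $\sigma_{b,i}$.  Since $T_k$ is symmetric and $X(\cdot)$ is linear, $\mathcal{S}:=\{X(x):x\in T_k\}$ satisfies $\mathcal{S}=-\mathcal{S}$, so Theorem \ref{thm:kmr} applies.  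The key geometric observation is that the rows of $X(x)$ have pairwise disjoint column supports, whence $X(x)X(x)^*$ is diagonal with $(b,b)$ entry $\twonorm{A_bx}^2$.  Consequently
$$\fronorm{X(x)}^2 = \twonorm{Ax}^2, \qquad \opnorm{X(x)} = \max_b \twonorm{A_bx} = \Xnorm{x},$$
which immediately identifies $d_F(\mathcal{S}) = \sup_{x\in T_k}\twonorm{Ax}$ and $\gamma_2(\mathcal{S},\opnorm{\cdot}) = \gamma_2(T_k,\Xnorm{\cdot})$.

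Substituting these into the conclusion of Theorem \ref{thm:kmr}, conditioned on $A$, bounds $\E_\sigma\sup_{x\in T_k}|\twonorm{\Phi x}^2 - \twonorm{Ax}^2|$ by a constant times $\sup_x\twonorm{Ax}\cdot\gamma_2(T_k,\Xnorm{\cdot}) + \gamma_2^2(T_k,\Xnorm{\cdot})$.  Taking $\E_A$, dividing by $mB$, and combining with the bias estimate yields the claimed inequality.  There is no genuinely hard step here: the argument is essentially bookkeeping to exhibit $\twonorm{\Phi x}^2$ as $\twonorm{X\sigma}^2$ for a carefully chosen block matrix.  The one subtle point, and the reason the construction can eventually beat direct row sampling, is precisely the disjoint-support structure of the blocks: it is what forces $\opnorm{X(x)}$ to collapse to the max-bucket norm $\Xnorm{x}$ rather than the Euclidean norm $\twonorm{Ax}$, and hence what makes the resulting $\gamma_2$ functional small enough to give the $\log k$ savings once the covering-number bounds of Section \ref{sec:covering} are applied.
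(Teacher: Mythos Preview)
Your proposal is correct and follows essentially the same approach as the paper: the same triangle-inequality split into a deviation and a bias term, the same identification $\E_\sigma\twonorm{\Phi x}^2=\twonorm{Ax}^2$, the same block matrix $X(x)$ with disjointly supported rows so that $\fronorm{X(x)}=\twonorm{Ax}$ and $\opnorm{X(x)-X(y)}=\Xnorm{x-y}$, and the same appeal to Theorem~\ref{thm:kmr}. Your commentary on why the disjoint-block structure is the crux is apt and matches the paper's intuition.
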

\begin{proof}
Let $H(b) = \setst{ h(b,i) }{i \in [B]}$ be the multiset of indices of
the rows of $A$ in bucket $b$,  and as above let $A_b$ denote the
$B\times d$ matrix whose rows are indexed by $H(b)$.  Let $\sigma_b
= \sum_{i=1}^B \sigma_{b,i} e_i$ denote the vector of sign flips
associated with bucket $b$. 
Notice that, by construction, conditioning on $A \sim \mathcal{A}$, we have
\begin{equation}\label{eq:goodExptmp}
 \E_\sigma \twonorm{\Phi x}^2 = \twonorm{Ax}^2,
\end{equation}
and so 
\begin{align}
&\mathbb{E} \sup_{x \in T_k} \left| \frac{1}{mB} \twonorm{\Phi x}^2 -
  \|x\|^2\right| \notag \\
&\qquad \leq
\mathbb{E}_{A} \left[ \frac{1}{mB} \mathbb{E}_\sigma  \sup_{x \in T_k}
    \left| \twonorm{\Phi x}^2 - \mathbb{E}_\sigma \twonorm{\Phi x}^2 \right| 
         + \sup_{x \in T_k}
  \left| \frac{1}{mB} \mathbb{E}_\sigma \twonorm{\Phi x}^2  -
    \twonorm{x}^2 \right| \right]
\notag \\
&\qquad = \frac{1}{mB} \mathbb{E}_{A} \mathbb{E}_\sigma \sup_{x \in T_k}
    \left| \twonorm{\Phi x}^2 - \twonorm{A x}^2 \right|  +
    \mathbb{E}_A \sup_{x \in T_k} \left|
  \frac{1}{mB} \twonorm{A x}^2 - \twonorm{x}^2 \right| \notag \\
&\qquad \lesssim \frac{1}{mB} \mathbb{E}_{A} \mathbb{E}_\sigma \sup_{x
  \in T_k} \left| \twonorm{\Phi x}^2 - \twonorm{A x}^2 \right|  +
    \sqrt{\frac{L}{mB}},
\label{eq:theplan}
\end{align}
where we have used \eqref{eq:niceA} in the last line and
\eqref{eq:goodExptmp} in the penultimate line.

Condition on the choice of $A$ until further
notice, and consider the first term. We may write
\[ E := \E_\sigma \sup_{x \in T_k} \left|\twonorm{ \Phi x }^2 - \E_\sigma
  \twonorm{ \Phi x}^2\right| =  \E_\sigma \sup_{x \in T_k} \left| \sum_b
  \abs{\ip{\sigma_b}{A_b x}}^2 - \E_\sigma \sum_b \abs{\ip{\sigma_b}{A_b x}}^2
\right|.\]

Now, we apply Theorem \ref{thm:kmr} to $\mathcal{S} =  \{ X(x) \in \mathbb{C}^{m
  \times mB} \mid x \in T_k\}$, where $X(x)$ is defined as follows:
\[ X(x) =  \begin{bmatrix} - (A_1 x)^* - & 0 & 0 & \cdots & 0\\
0 & - (A_2 x)^* - & 0 & \cdots & 0\\
0 &0& - (A_3 x)^* - & \cdots & 0\\
\vdots &\vdots & \vdots &  &\vdots\\
0&0&0&\cdots& -(A_m x)^* -
\end{bmatrix}.\]
Let $\sigma$ be the vector in $\{-1,1\}^M$ defined as
$(\sigma_1^*,\ldots,\sigma_m^*)^*$. By construction, $\twonorm{X(x)
  \sigma}^2 = \sum_b \abs{\ip{\sigma_b}{A_b
  x}}^2$, and so by Theorem \ref{thm:kmr}, it suffices to control
$d_F(\mathcal{S})$ and $\gamma_2(\mathcal{S}, \opnorm{\cdot})$. The
Frobenius norm of $X(x)$ is 
\[ \fronorm{X(x)}^2 = \sum_{b \in [m]} \twonorm{A_b x}^2 =
\twonorm{Ax}^2.\]
For the $\gamma_2$ term, notice that for any $x,y \in T_k$, 
\[ \opnorm{ X(x) - X(y) } = \max_{b \in [m]} \|A_b(x-y)\|_2 =
\decnorm{x-y}{X}.\]
Thus, $\gamma_2( \mathcal{S}, \opnorm{\cdot} ) = \gamma_2( T_k,
\decnorm{\cdot}{X} )$. Then Theorem \ref{thm:kmr} implies that
\[ E \lesssim \max_{x \in T_k} \twonorm{Ax} \gamma_2( T_k,
  \decnorm{\cdot}{X} ) + \gamma_2^2(T_k, \decnorm{\cdot}{X}).\]
Plugging this into \eqref{eq:theplan}, we conclude
\begin{equation}
\E \sup_{x \in T_k} \left| \frac{1}{mB} \twonorm{\Phi x}^2 -
  \twonorm{x}^2\right| \lesssim \frac{1}{mB} \left( \E_A \sup_{x \in T_k}
  \twonorm{Ax} \gamma_2( T_k, \decnorm{\cdot}{X} ) + \E_A \gamma_2^2(T_k,
  \decnorm{\cdot}{X}) \right) + \sqrt{ \frac{ L }{mB}}.
\end{equation}
\end{proof}

Theorem \ref{thm:boundbygamma2} leaves us with the task of controlling $\gamma_2(T_k, \decnorm{\cdot}{X})$,
which we do in the following theorem.

\begin{theorem}\label{thm:covering}
Suppose that $A$ is a matrix such that \eqref{eq:good} holds, with 
\[ \ell(s) \leq Q_1 \sqrt{B} + Q_2 \sqrt{s}.\]
Suppose further that $\inftynorm{a_i} \leq 1$ for all
$i$, and suppose that \[ B \geq \max\{ Q_2^2 \log^2 m, Q_1^2 \log m
\log k \}, \hbox{ and }  k \geq Q_1^2 \log^2 m . \]
Then
\[ \gamma_2(T_k, \decnorm{\cdot}{X}) \lesssim \sqrt{kB\log d} \cdot
\log (Bk).\]
\end{theorem}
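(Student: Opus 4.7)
The plan is to bound $\gamma_2(T_k, \|\cdot\|_X)$ via Dudley's integral inequality \eqref{eq:dudleyize} and carefully bound the covering numbers $\mathcal{N}(T_k, \|\cdot\|_X, u)$ at different scales. First, I would establish the diameter bound: for any $x \in T_k$, one has $\|x\|_X = \max_b \|A_b x\|_2 \leq \ell(k)\|x\|_2 \leq Q_1\sqrt{B} + Q_2\sqrt{k}$, and the hypotheses $k \geq Q_1^2\log^2 m \geq Q_1^2$ and $B \geq Q_2^2\log^2 m \geq Q_2^2$ imply that each term is $\lesssim \sqrt{kB}$, so $\mathrm{diam}_X(T_k) \lesssim \sqrt{kB}$.

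Next, I would control covering numbers in two regimes, using whichever bound is smaller at a given scale $u$. For small $u$, use a volumetric $\varepsilon$-net: fixing a support of size $k$ (there are $\binom{d}{k}$ choices) and covering the $k$-dimensional unit ball in $\|\cdot\|_2$, then upgrading to $\|\cdot\|_X$ via the bound $\|x\|_X \leq \ell(k)\|x\|_2 \lesssim \sqrt{kB}\|x\|_2$, one gets $\log\mathcal{N}(T_k, \|\cdot\|_X, u) \lesssim k\log(d\sqrt{kB}/u)$. For larger $u$, use Maurey's empirical method: write any $x \in T_k$ (with $\|x\|_1 \leq \sqrt{k}$) as the expectation of a random vector $Y$ taking values in $\{\pm\|x\|_1 e_i : i \in \mathrm{supp}(x)\}$, form the empirical mean $\hat x = \tfrac1N\sum_{j=1}^N Y_j$, and use symmetrization together with Gaussian comparison to bound $\E\|x - \hat x\|_X \lesssim \tfrac1N\E\|W\|_X$ where $W = \sum_j g_j Y_j$.

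The crucial step, and the key technical gain over \cite{RV08}, is to bound $\E\|W\|_X$ \emph{without a union bound over $b$}. Since $W$ is supported on $\mathrm{supp}(x)$, it is $k$-sparse, so hypothesis \eqref{eq:good} yields $\|W\|_X = \max_b \|A_b W\|_2 \leq \ell(k)\|W\|_2$ pointwise; combined with $\E\|W\|_2 \leq \sqrt{\sum_j\|Y_j\|_2^2} \leq \sqrt{Nk}$, this gives $\E\|x - \hat x\|_X \lesssim \ell(k)\sqrt{k/N}$. Setting this $\leq u$ and counting the $(2d)^N$ choices for the empirical mean yields a Maurey bound of the form $\log\mathcal{N}(T_k,\|\cdot\|_X,u) \lesssim kB\log d / u^2$ (up to polylogs, using $\ell(k)^2 \lesssim kB$ and the hypotheses on $B,k$ to absorb the $Q_1,Q_2$ dependence into constants; it is precisely here that $B \geq \mathrm{polylog}(d)$ is needed to avoid the $\sqrt{\log m}$ factor that a naive union bound would introduce).

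Finally, I would split the Dudley integral at a threshold $u^* \approx 1$: the Maurey bound contributes $\int_{u^*}^{\sqrt{kB}} \sqrt{kB\log d}/u\, du \lesssim \sqrt{kB\log d}\cdot\log(Bk)$, while the volumetric bound contributes $\int_0^{u^*} \sqrt{k\log(dBk/u)}\,du \lesssim u^*\sqrt{k\log(dBk)}$, which is subsumed by the Maurey term. I expect the main obstacle to be a careful execution of the Maurey step: handling the concentration of $\|W\|_2$ and the empirical-net counting so that the dependence on $Q_1, Q_2$ cleanly collapses to the stated form under the hypotheses $B \geq \max\{Q_2^2\log^2 m, Q_1^2\log m\log k\}$ and $k \geq Q_1^2\log^2 m$, and verifying that the polylog slack between $\ell(k)^2$ and $kB$ does not blow up the integral beyond the claimed $\log(Bk)$ factor.
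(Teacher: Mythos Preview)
Your overall architecture --- Dudley integral, volumetric net for small $u$, Maurey for large $u$, split at $u^* \approx 1$ --- matches the paper exactly. The diameter bound and the small-$u$ net are fine. The gap is in the Maurey step.

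You bound $\E\|W\|_X$ by applying \eqref{eq:good} pointwise: $\|W\|_X \le \ell(k)\|W\|_2$, then $\E\|W\|_2 \le \sqrt{Nk}$. This yields $\E\|x-\hat x\|_X \lesssim \ell(k)\sqrt{k/N}$, hence $N \gtrsim k\,\ell(k)^2/u^2$. Under the hypotheses one has $\ell(k)^2 \lesssim kB$ (even $\lesssim kB/\log^2 m$), so you get $N \gtrsim k^2B/u^2$ and $\log\mathcal N \lesssim k^2B\log d/u^2$, \emph{not} $kB\log d/u^2$ as you claim. After integrating, your $\gamma_2$ bound is off by a factor of $\sqrt{k}/\log m$, which is unbounded since $k$ can be far larger than $\log^2 m$. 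The step ``$\ell(k)^2 \lesssim kB$ gives $\log\mathcal N \lesssim kB\log d/u^2$'' silently drops a factor of $k$.

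The reason $\|W\|_X \le \ell(k)\|W\|_2$ is too crude is that $\ell(k)/\sqrt{B}$ is roughly $\sqrt{k}$, whereas for a \emph{generic} vector supported on $\mathrm{supp}(x)$ one should have $\|A_b w\|_2 \approx \sqrt{B}\,\|w\|_2$, not $\ell(k)\|w\|_2$; the RIP-type bound is only tight for adversarial $k$-sparse directions. The paper does not bound $\|W\|_X$ via the operator norm of $A_b|_{\mathrm{supp}(x)}$. Instead it writes $W = \mathbf Z g$ (packing the $Z_i$ as columns of a matrix $\mathbf Z$), conditions on $\mathbf Z$, and uses Gaussian concentration of the Lipschitz function $g \mapsto \|A_b\mathbf Z g\|_2$ to get
\[
\E_g \max_b \|A_b\mathbf Z g\|_2 \;\lesssim\; \max_b \|A_b\mathbf Z\|_F \;+\; \sqrt{\log m}\,\max_b \|A_b\mathbf Z\|_{2\to 2}.
\]
The Frobenius term is bounded by $\sqrt{Bs}$ \emph{using only $\|a_i\|_\infty \le 1$}, with no dependence on $b$ or on $\ell$; this is the main term and is exactly what you need. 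The operator-norm deviation term is where the $\sqrt{\log m}$ appears, and to absorb it the paper splits $x$ into a ``spiky'' part $x_L$ (small $\ell_1$) and a ``flat'' part $x_{\overline L}$ (small $\ell_\infty$), bounding $\|A_b\mathbf Z\|_{2\to 2}$ differently on each piece (by $\|A_b\mathbf Z\|_F$ on the spiky part, and by $\ell(\min(k,s))\cdot\|Z\|_\infty^{1/2}$ with a Chernoff bound on $\|Z\|_\infty$ for the flat part). It is precisely in controlling this second piece that the hypotheses $B \ge \max\{Q_2^2\log^2 m,\,Q_1^2\log m\log k\}$ and $k \ge Q_1^2\log^2 m$ are consumed --- not, as you suggest, in collapsing $\ell(k)^2$ to $kB$.
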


\begin{proof}
By \eqref{eq:dudleyize},
\begin{equation}\label{eqn:dudley}
	 \gamma_2(T_k, \Xnorm{\cdot} ) \lesssim \int_{u = 0}^Q \sqrt{\log \mathcal{N}( T_k,  \Xnorm{\cdot}, u ) }\, du,
\end{equation}
where $Q = \sup_{x \in T_k} \Xnorm{x}$.  
Notice that we can bound
\[
Q^2 = \sup_{x \in T_k} \max_b \twonorm{A_bx}^2= \sup_{x \in T_k} \max_b
\sum_{i \in [B]} \abs{\ip{a_{h(b, i)}}{x}}^2 \leq B \sup_{x \in T_k} \|x\|_1^2 \leq Bk
\]
using the fact that each entry of $a_{h(b,i)}$ has magnitude at most $1$.  We
follow
the approach of~\cite{RV08} and estimate the covering number using two
nets, one for small $u$ and one for large $u$.  

For small $u$, we use a standard $\ell_2$ net of $B_2$:  
we have
\[ \Xnorm{x} \leq Q \twonorm{x}\] so $\mathcal{N}(T_k, \Xnorm{\cdot},
u) \leq \mathcal{N}(T_k, \twonorm{\cdot}, u/Q)$.  Observing that $T_k$ is the union
of $\binom{d}{k} = \left( \frac{d}{k} \right)^{O(k)}$ copies of
$B_2^k$ (the unit $\ell_2$-ball of dimension $k$), we may cover $T_k$ by covering 
cover each copy of $B_2^k$
with a net of width $u/Q$.  By a standard volume estimate
\cite[Eqn.\ (5.7)]{Pisier89},
the size of each such net is $(1 + 2Q/u)^k$, and so
\[
\sqrt{\log \mathcal{N}(T_k, \Xnorm{\cdot}, u)} \lesssim \sqrt{k
  \log (d/k) + k \log (1 + 2Q/u)} \lesssim \sqrt{ k
  \log(dQ/u)}.
\]
For
large $u$ the situation is not as simple.
We show in Lemma \ref{lem:covering} that, 
as long as \eqref{eq:good} holds,
\[
\sqrt{ \log \mathcal{N}( T_k, \Xnorm{\cdot}, u)} 
\lesssim \frac{ \sqrt{kB\log d}}{u}.
\] 

We plug these bounds into \eqref{eqn:dudley} and integrate, using the
first net for $u \in (0,1)$ and the second for $u > 1$.
We find
\begin{align*}
\int_{u = 0}^Q \sqrt{\log \mathcal{N}(T_k,  \max_b \|F_b \cdot \|, u) }\, du 
&\lesssim \int_{u = 0}^1  \sqrt{k \log(dQ/u)}\,
du + \int_{u = 1}^Q \frac{\sqrt{ kB \log d }}{u} \,du \\
&\lesssim \sqrt{k\log(dQ)} + \sqrt{kB \log d } \log Q\\
&\lesssim \sqrt{kB\log d}\log Q\\
&\leq \sqrt{kB\log d}\log(Bk)
\end{align*}
as claimed.
\end{proof}

It remains to put Theorem \ref{thm:boundbygamma2} and Theorem
\ref{thm:covering} together to prove Theorem \ref{thm:maingen}.

\begin{proof} (Proof of Theorem \ref{thm:maingen}.)  We need to show that
  \[
  \Delta := \sup_{x \in T_k} \left| \frac{1}{mB} \twonorm{\Phi
    x}^2 - \twonorm{x}^2 \right| \lesssim \eps
  \]
  with $3/4$ probability.  
  We have by~\eqref{eq:concentration}
  that~\eqref{eq:good} holds with $7/8$ probability over $\mathcal{A}$, and we will show that
   $\Delta \lesssim \eps$ with $7/8$
  probability when $A$ is drawn from the distribution $\mathcal{A}' =
  (\mathcal{A} \mid \text{\eqref{eq:good} holds})$.  
  Together, this will imply the conclusion of Theorem \ref{thm:maingen}.

Note that as long as \eqref{eq:niceA} holds for $\mathcal{A}$, \eqref{eq:niceA} holds for $\mathcal{A}'$ as well.  Indeed,  
  \[
  \E_{A \sim \mathcal{A}'} \sup_{x \in T_k} \left| \frac{1}{mB}
    \twonorm{A x}^2 - \twonorm{x}^2 \right| \leq \left(\frac{8}{7}\right)\E_{A \sim
    \mathcal{A}} \sup_{x \in T_k} \left| \frac{1}{mB} \twonorm{A x}^2 -
    \twonorm{x}^2 \right| \lesssim \eps,
  \]
  so~\eqref{eq:niceA} holds for $\mathcal{A}'$. 
  For the rest of the proof, we consider $A \sim \mathcal{A}'$, so we have
\[
	\frac{1}{\sqrt{mB}} \E_A \sup_{x \in T_k} \twonorm{Ax}  \leq \sqrt{1 + O(\eps)} \lesssim 1.
\]
  Under the parameters of Theorem~\ref{thm:maingen} and
  because~\eqref{eq:good} holds for all $A \sim \mathcal{A}'$,
  Theorem~\eqref{thm:covering} implies
  \[ \gamma_2(T_k, \decnorm{\cdot}{X}) \leq \sqrt{kB\log d} \cdot
  \log (Bk).\]
Then
\begin{align*}
 \frac{1}{mB} \E_A\left[ \sup_{x \in T_k} \twonorm{Ax} \cdot \gamma_2(T_k, \decnorm{\cdot}{X} ) \right]
&\lesssim \frac{\sqrt{k\log(d)} \cdot \log(Bk)}{\sqrt{m}} \leq \eps.
\end{align*}
Similarly,
\begin{align*}
 \frac{1}{mB} \E_A \gamma_2^2(T_k, \decnorm{\cdot}{X}) &\lesssim \frac{ k\log(d) \log^2(Bk) }{m} \leq \eps^2.
\end{align*}
By Theorem \ref{thm:boundbygamma2}, and using the above bounds,
\begin{align*}
	\E_A[\Delta] &\lesssim \frac{1}{mB} \left( \E_A \sup_{x \in T_k}
          \twonorm{Ax} \gamma_2( T_k, \decnorm{\cdot}{X} ) + 
		\E_A \gamma_2^2(T_k, \decnorm{\cdot}{X}) \right) +
              \sqrt{ \frac{ L }{mB}}\\
	&\lesssim \eps + \eps^2 + \eps \\
	&\lesssim \eps.
\end{align*}
Therefore by Markov's inequality,
we have $\Delta \lesssim \eps$ with arbitrarily high constant probability over $A \sim \mathcal{A}'$.
In particular, we may adjust the constants so that $\Delta \lesssim \eps$ with probability at least $7/8$ over
$A \sim \mathcal{A}'$, which was our goal.
\end{proof}

\newcommand{\enorm}[2]{\left \lVert#2\right \rVert_{#1}}

\section{Covering number bound}
\label{sec:covering}
In this section, we prove the covering number lemma needed for the proof of Theorem \ref{thm:covering}.
Recall the definition $\enorm{X}{x} = \max_{b \in [m]} \enorm{2}{A_b
  x}$, and that $T_k$ is the set of $k$-sparse vectors in $\C^d$ with
$\ell_2$ norm at most $1$.

\begin{lemma} \label{lem:covering} 
Suppose that 
the conditions of Theorem \ref{thm:covering} hold.  Then 
  \[
  \N(T_k/\sqrt{k}, \enorm{X}{\cdot}, u) \leq (2d+1)^{O(B/u^2)}.
  \]
\end{lemma}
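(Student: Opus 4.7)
The plan is to use Maurey's empirical method, which is the standard tool for covering numbers of convex hulls of bounded sets in a Banach space. By Cauchy--Schwarz, every $x \in T_k/\sqrt{k}$ satisfies $\|x\|_1 \leq \sqrt{k}\,\twonorm{x} \leq 1$, so $T_k/\sqrt{k}$ lies in the unit $\ell_1$-ball of $\mathbb{C}^d$, which is the convex hull of $\{0,\pm e_1,\dots,\pm e_d\}$. Given such an $x$, introduce a random vector $Y$ taking the value $\mathrm{sign}(x_j)\|x\|_1 e_j$ with probability $|x_j|/\|x\|_1$ and the value $0$ with probability $1-\|x\|_1$, so that $\E Y = x$ and $Y$ ranges over only $2d+1$ vectors. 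Let $z = \tfrac{1}{L}\sum_{l=1}^L Y_l$ for i.i.d.\ copies $Y_l$; then $\E z = x$ and $z$ takes at most $(2d+1)^L$ distinct values. It suffices to pick $L = O(B/u^2)$ and show $\Pr[\Xnorm{z-x}\leq u]>0$, because the set of realizations of $z$ then forms a $u$-net of $T_k/\sqrt{k}$ in $\Xnorm{\cdot}$.

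The next step is to bound $\E\,\Xnorm{z-x}$. Standard symmetrization (followed by absorbing the signs of $x_{j_l}$ into fresh Rademachers $\varepsilon_l$) gives
\[
\E\,\Xnorm{z-x} \;\leq\; \frac{2}{L}\,\E\max_{b\in[m]}\twonorm{A_b v},
\qquad v := \|x\|_1 \sum_{l=1}^L \varepsilon_l\, e_{j_l}.
\]
The crucial structural observation is that $v$ does \emph{not} depend on $b$ and is supported on at most $L$ coordinates, so the hypothesis \eqref{eq:good} gives the deterministic bound $\max_b \twonorm{A_b v} \leq \ell(L)\,\twonorm{v}$ simultaneously for all $b$, with no union bound loss. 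Since $\E\twonorm{v}^2 = \|x\|_1^2 L \leq L$, Jensen's inequality yields $\E\twonorm{v}\leq\sqrt{L}$, and combining these estimates,
\[
\E\,\Xnorm{z-x} \;\leq\; \frac{2\,\ell(L)}{\sqrt{L}} \;\leq\; 2Q_1\sqrt{B/L} + 2Q_2.
\]
Choosing $L$ on the order of $Q_1^2 B/u^2$ controls the first term by $u/4$; provided $u$ is not too small compared to $Q_2$, the second term is likewise at most $u/4$. Markov's inequality then produces some realization $z$ with $\Xnorm{z-x}\leq u$, proving $\N(T_k/\sqrt{k},\Xnorm{\cdot},u)\leq (2d+1)^{O(B/u^2)}$.

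The main obstacle I expect is the additive floor $2Q_2$: Maurey's bound cannot be driven below this level no matter how large $L$ is taken, so the argument as stated only yields a $u$-net for $u$ comparable to or larger than $Q_2$. For very small $u$ one must supplement by another covering (for instance, a volumetric $\ell_2$-net on each of the $\binom{d}{k}$ possible supports of $k$-sparse vectors), and the hypothesis $B \geq Q_2^2\log^2 m$ forces $Q_2 \leq \sqrt{B}/\log m$, which keeps this supplementary regime narrow enough that the two bounds glue together at a scale where both are meaningful. The $Q_1^2$ factor that appears in the Maurey exponent is similarly controlled by the hypothesis $B \geq Q_1^2\log m\log k$, so it ultimately gets absorbed into the $\log^2(Bk)$ factor of the final $\gamma_2$ bound in Theorem~\ref{thm:covering}.
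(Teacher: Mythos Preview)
Your proposal is correct in spirit and would suffice for Theorem~\ref{thm:covering}, but it takes a genuinely different route from the paper's and does not quite prove Lemma~\ref{lem:covering} as stated. Both arguments start with Maurey's empirical method and symmetrization; the divergence is in how the symmetrized process $\max_b\|A_b v\|_2$ is controlled. You observe that the random vector $v$ is $\min(L,k)$-sparse and apply \eqref{eq:good} directly and deterministically, which is clean and avoids any union bound over $b$, but leaves you with the additive floor $2Q_2$ and a $Q_1^2$ in the Maurey exponent. The paper instead passes to Gaussians, uses Gaussian concentration to reduce $\max_b\|A_b \mathbf{Z} g\|_2$ to $\max_b\E_g\|A_b\mathbf{Z}g\|_2 + \sqrt{\log m}\,\|A_b\mathbf{Z}\|_{2\to 2}$, and then bounds $\|A_b\mathbf{Z}\|_{2\to 2}$ two different ways: by the Frobenius norm $\sqrt{B\|Z\|_1}$ (using only entrywise boundedness of $A$), and by $\ell(\min(k,s))\,\|Z\|_\infty^{1/2}$ (using \eqref{eq:good}). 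A ``spiky/flat'' split of $x$ at level $\log m/k$ lets the $\ell_1$ bound handle the spiky part and the $\ell_\infty$ bound the flat part, so that the $\sqrt{\log m}$ concentration cost cancels against the $1/\log m$ smallness of $\|x_L\|_1$ and $\|x_{\overline L}\|_\infty$. This yields the clean exponent $O(B/u^2)$ valid for all $u$, with no $Q_1,Q_2$ residue. Your simpler argument trades this cleanliness for directness: you correctly note that the hypotheses $B\geq Q_2^2\log^2 m$, $B\geq Q_1^2\log m\log k$, and $k\geq Q_1^2\log^2 m$ absorb the $Q_1,Q_2$ factors when one plugs into the Dudley integral, so the final $\gamma_2$ bound is the same. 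One small slip: your $v$ is supported in $\operatorname{supp}(x)$, so its sparsity is $\min(L,k)$ rather than $L$; this is what keeps you within the range $s\leq 2k$ where \eqref{eq:good} applies, and it also shows that for small $u$ one may take $L>k$ and still get a (worse) exponent $O((Q_1^2 B + Q_2^2 k)/u^2)$, which under the hypotheses is again harmless.
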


We will prove this under the assumption that $x \in T_k$ is real,
using only that~\eqref{eq:good} holds for $s \leq k$ and that $A$ has
bounded entries.  Then by Proposition~\ref{prop:go-real} in the
Appendix, we have $\N(T_k/\sqrt{k}, \enorm{X}{\cdot}, u)$ over the
complex numbers is less than $\N(T_{2k}/\sqrt{2k}, \enorm{\tilde{X}}{\cdot},
u)$ over the reals, where $\enorm{\tilde{X}}{\cdot}$ denotes a version
of the $\enorm{X}{\cdot}$ for a matrix $\tilde{A}$ of bounded entries that
satisfies~\eqref{eq:good} for $s \leq 2k$.  
Adjusting the constants by a factor of $2$ gives the final result. 

As in \cite{RV08}, we use Maurey's empirical method (see
\cite{Carl85}).  Consider $x \in T_k/\sqrt{k}$, and choose a parameter $s$.  
For $i \in [s]$, define a random variable $Z_i$, so that $Z_i = e_j \text{sign}(x_j)$ with
probability $\abs{x_j}$ for all $j \in [d]$, and $0$ with probability
$1-\|x\|_1$. Notice that by the assumption that $x$ is real, $\text{sign}(x_j)$ is well defined.
Further, because $T_k/\sqrt{k} \subset B_1$, this is a valid
probability distribution.  We want to show for every $x$ that
\begin{align}
  \E\enorm{X}{x - \frac{1}{s} \sum Z_i} \lesssim
  \sqrt{\frac{B}{s}}. \EquationName{ineq-goal}
\end{align}
This would imply that the right hand side is at most $u$ for $s
\lesssim B/u^2$.  If this holds, then the set of all possible
$\frac{1}{s} \sum Z_i$ forms a $u$-covering.  As there are only $2d +
1$ choices for each $Z_i$, there are only $(2d + 1)^s$ different
vectors of the form $\frac{1}{s} \sum_{i=1}^s Z_i$.  These form a
$u$-covering, so \Equation{ineq-goal} will imply
\[\N(T_k, \enorm{X}{\cdot},u) \leq (2d + 1)^{O(B/u^2)}.\]

We now show \Equation{ineq-goal}. Draw a Gaussian vector $g \sim N(0, I_s)$, and define
\[
\G(x) = \E_{g,Z} \enorm{X}{\sum Z_ig_i}
\]
By a standard symmetrization argument followed by a comparison
principle (Lemma 6.3 and Eq.~(4.8) in~\cite{lt:1991} respectively, or
the proof of Lemma 3.9 in \cite{RV08}), 
\[
\E\enorm{X}{x - \frac{1}{s} \sum Z_i} \lesssim
\frac{1}{s}\G(x),
\]
so it suffices to bound $\G(x)$ by $O(\sqrt{Bs})$.

Let $L = \{i : \abs{x_i} > \frac{\log m}{k}\}$ be the set of
coordinates of $x$ with ``large'' value in magnitude.  Then
\[
\G(x) \leq \G(x_L) + \G(x_{\overline{L}})
\]
by partitioning the $Z_i$ into those from $L$ and those from
$\overline{L}$ and applying the triangle inequality.  
Notice that $x_L$ is ``spiky" and $x_{\overline{L}}$ is ``flat:" more precisely, we have 
\begin{equation}\label{eq:spikyflat}
\enorm{1}{x_L} \leq
\frac{1}{\log m} \qquad \text{and} \qquad \enorm{\infty}{x_{\overline{L}}} \leq
\frac{\log m}{k},
\end{equation}
using Cauchy-Schwarz to bound the $\ell_1$ norm.
To bound $\G(x_L)$ and $\G(x_{\overline{L}})$ we use the
following lemma.

\begin{lemma}\label{l:infl1bound}
  Suppose that \eqref{eq:good} holds.  Then the
	following inequalities hold for all $x$:
  \begin{align}
    \G(x) &\lesssim \sqrt{Bs \enorm{1}{x} \log
        m} \label{e:l1bound}\\
    \G(x) &\lesssim \sqrt{Bs} +  \sqrt{\log m} \left( Q_1\sqrt{B} +
      Q_2\sqrt{\min(k,s)} \right) \sqrt{ s\|x\|_\infty + \log k }  \label{e:linfbound}
  \end{align}
\end{lemma}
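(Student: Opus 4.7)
The plan is to condition on the (random) selection vectors $Z := [Z_1 \mid \cdots \mid Z_s]$, apply Gaussian concentration to the conditional Gaussian process, and then take expectation over $Z$. Write $w = Zg$ and $M_b := A_b Z$, so that $\|A_b w\|_2 = \|M_b g\|_2$; as a function of $g$ this is $\|M_b\|_{\mathrm{op}}$-Lipschitz with conditional expectation at most $\|M_b\|_F$. Gaussian concentration plus a union bound over $b \in [m]$ yields
\begin{equation}
\E_g \max_{b \in [m]} \|M_b g\|_2 \;\lesssim\; \max_{b}\|M_b\|_F \;+\; \sqrt{\log m}\cdot \max_{b} \|M_b\|_{\mathrm{op}}. \tag{$\ast$}
\end{equation}
It then remains to bound each of these two quantities in expectation over $Z$.

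For the Frobenius term, each column of $A_b$ has $\ell_2$-norm at most $\sqrt{B}$ (entries bounded by $1$, $B$ rows), and $Z_i$ is a signed standard basis vector (or zero), so $\|A_b Z_i\|_2 \le \sqrt{B}\,\mathbf{1}[Z_i \ne 0]$. Thus $\|M_b\|_F^2 \le BN$ (independently of $b$), where $N := |\{i : Z_i \ne 0\}| \sim \mathrm{Bin}(s, \|x\|_1)$, and Jensen gives $\E_Z \sqrt{BN} \le \sqrt{Bs\|x\|_1}$. To prove \eqref{e:l1bound} one now uses the trivial bound $\|M_b\|_{\mathrm{op}} \le \|M_b\|_F \le \sqrt{BN}$; substituting into $(\ast)$ and taking $\E_Z$ gives
$$\G(x) \;\lesssim\; (1 + \sqrt{\log m})\cdot \sqrt{Bs\|x\|_1} \;\lesssim\; \sqrt{Bs\|x\|_1 \log m},$$
which is \eqref{e:l1bound}.

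For \eqref{e:linfbound} a sharper operator-norm bound is needed. Any vector in the range of $Z$ is supported on $\mathrm{supp}(x)$ (of size $\le k$) and has at most $s$ nonzero entries, hence is $\min(s,k)$-sparse, so \eqref{eq:good} gives $\|A_b Zv\|_2 \le \ell(\min(s,k))\,\|Zv\|_2$ for all $v$, and consequently $\|M_b\|_{\mathrm{op}} \le \ell(\min(s,k))\,\|Z\|_{\mathrm{op}}$. A direct calculation yields $ZZ^* = \mathrm{diag}(N_1,\ldots,N_d)$ with $N_j := |\{i : Z_i = \pm e_j\}| \sim \mathrm{Bin}(s,|x_j|)$, so $\|Z\|_{\mathrm{op}}^2 = \max_j N_j$. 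Bernstein's inequality applied to each $N_j$, together with a union bound over $j \in \mathrm{supp}(x)$ (size $\le k$) and integration of the tail, produces $\E \max_j N_j \lesssim s\|x\|_\infty + \log k$. Plugging both bounds into $(\ast)$, using $\|x\|_1 \le 1$ on $T_k/\sqrt{k}$ (so the Frobenius term is at most $\sqrt{Bs}$) and $\ell(\min(s,k)) \le Q_1\sqrt{B} + Q_2\sqrt{\min(s,k)}$, yields exactly \eqref{e:linfbound}.

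The main obstacle is the refined operator-norm estimate for \eqref{e:linfbound}: a naive union bound over all $d$ coordinates of $N_j$ would incur $\sqrt{\log d}$ rather than the desired $\sqrt{\log k}$, so one must use that $Z$ is supported on $\mathrm{supp}(x)$ to restrict the union bound. One also needs a clean tail-integral argument to turn the Bernstein high-probability bound into the expectation $\E \max_j N_j \lesssim s\|x\|_\infty + \log k$, and to carefully verify that the covariance structure $ZZ^*$ is genuinely diagonal (which relies on $Z_i$ being a \emph{single} signed basis vector rather than a general sparse vector).
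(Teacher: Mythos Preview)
Your proposal is correct and follows essentially the same route as the paper: condition on $Z$, apply Gaussian concentration to get the Frobenius-plus-$\sqrt{\log m}$-times-operator-norm decomposition, bound the Frobenius term by $\sqrt{BN}$ with $N=\|Z\|_1$, and then bound the operator norm either trivially (for \eqref{e:l1bound}) or via the factorization through $\ell(\min(k,s))\cdot\|\mathbf Z\|_{2\to 2}$ together with $\E\|Z\|_\infty \lesssim s\|x\|_\infty + \log k$ (for \eqref{e:linfbound}). The one cosmetic difference is that you compute $\|\mathbf Z\|_{2\to 2}^2 = \max_j N_j$ directly from the diagonal structure of $\mathbf Z\mathbf Z^*$, whereas the paper obtains the same bound via the interpolation inequality $\|\mathbf Z\|_{2\to 2}^2 \le \|\mathbf Z\|_{1\to 1}\|\mathbf Z\|_{\infty\to\infty}$; both arguments rely on the fact that each $Z_i$ is a single signed basis vector with sign determined by $x$, so the two quantities coincide.
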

\begin{proof}
  \newcommand{\ZZ}{\mathbf{Z}} Let $\ZZ \in \{-1, 0, 1\}^{d \times s}$
  have columns $Z_i$, and $Z = \sum_i Z_i$.  Then
  \[
  \G(x) = \E \max_{b \in [m]} \enorm{2}{A_b \ZZ g}
  \]
  Consider $\enorm{2}{A_b \ZZ g}$ for a single $b \in [m]$.  This is a
  $C$-Lipschitz function of a Gaussian for $C = \enorm{2\to 2}{A_b \ZZ}$.
  Therefore~\cite[Eq. (1.4)]{lt:1991},
  \[
  \Pr_g[\enorm{2}{A_b \ZZ g} > \E_g\enorm{2}{A_b \ZZ g} + t\enorm{2\to 2}{A_b
    \ZZ}] < e^{-\Omega(t^2)}.
  \]
  Hence by a standard computation for subgaussian random variables~\cite[Eq. (3.13)]{lt:1991}), 
  \[
  \G(x) \lesssim \E_Z \max_{b \in [m]} \E_g\enorm{2}{A_b \ZZ g} + \sqrt{\log
    m}\enorm{2\to 2}{A_b \ZZ}.
  \]
  Now,
  \begin{align}\label{e:frob1}
    \E_g\enorm{2}{A_b \ZZ g} \leq \sqrt{\E_g\enorm{2}{A_b \ZZ g}^2} =
    \enorm{F}{A_b \ZZ} = \sqrt{B \enorm{1}{Z}}
  \end{align}
  and
  \begin{align}\label{e:frob2}
    \E_Z \sqrt{B \enorm{1}{Z}} \leq \sqrt{B \E_Z
      \enorm{1}{Z}} = \sqrt{B s \enorm{1}{x}} \leq \sqrt{Bs}.
  \end{align}
  Thus
  \begin{align}
    \G(x) \leq \sqrt{Bs\enorm{1}{x}} + O\left(\E_Z \max_{b \in [m]} \sqrt{\log
      m}\enorm{2\to 2}{A_b \ZZ}\right).
    \label{e:Gop}
  \end{align}
  Thus it suffices to bound $\enorm{2\to 2}{A_b \ZZ}$ in terms of
  $\enorm{1}{x}$ and $\enorm{\infty}{x}$.  First, we have
  \[
  \enorm{2\to 2}{A_b \ZZ} \leq \enorm{F}{A_b \ZZ}
  \]
  and so by Equations~\eqref{e:frob1} and~\eqref{e:frob2} we have
  \[
  \G(x) \leq \sqrt{Bs\enorm{1}{x}\log m},
  \]
  as desired for Equation~\eqref{e:l1bound}.

Second, we turn to Equation~\eqref{e:linfbound}.  For a matrix $A \in m \times d$ and a set $S \subset [d]$, let $\left.A\right|_S$ denote the $m \times d$ matrix with all the columns not indexed by $S$ set to zero.
  Then, we have
  \begin{align}\label{e:tolinf}
    \enorm{2\to 2}{A_b \ZZ} \leq \enorm{2\to 2}{\left.A_{b} \right|_{\supp(Z)}}
    \enorm{2\to 2}{\ZZ} \leq \max_{\abs{S} \leq \min(k, s)} \enorm{2\to 2}{\left.A_{b}\right|_{ S}} \enorm{\infty}{Z}^{1/2}.
  \end{align}
In the final step, we used the fact that for any matrix $A$, $\|A\|_{2
  \to 2} \leq \sqrt{\|A\|_{1 \to 1} \|A\|_{\infty \to \infty}}$ (see
\Lemma{operator-inequalities} in the Appendix).
By the assumption \eqref{eq:good} and the choice of $\ell$, 
\[ \max_{b \in [m]} \sup_{x \in T_{\min(k,s)}}
  \twonorm{ A_b x } \leq Q_1 \sqrt{B} + Q_2 \sqrt{\min(k,s)},\] 
so
\[ \max_{b \in [m]} \enorm{2 \to 2}{A_b\ZZ} \leq \enorm{\infty}{Z}^{1/2} \left( Q_1 \sqrt{B} + Q_2 \sqrt{\min(k,s)} \right).\]
  Finally, we bound $\E_Z \enorm{\infty}{Z}$.  By a Chernoff bound,
  for any $j \in \supp(x)$, we have
 \[ \prob{ \abs{ \left(\sum Z_i\right)_j } > s\abs{x_j}  + t} \leq
 e^{-\Omega(t)}. \]
  Integrating, we have
  \[ \E \enorm{\infty}{Z} \lesssim s\enorm{\infty}{x} + \log k.\]  
Thus
\[ \E_Z \max_{b \in [m]} \enorm{2 \to 2}{A_b\ZZ} \lesssim \left( s\enorm{\infty}{x} + \log k \right)^{1/2} \left( Q_1 \sqrt{B} + Q_2 \sqrt{\min(k,s)} \right).\]
Combining this
  with  Equation~\eqref{e:Gop} gives \eqref{e:linfbound}.
\end{proof}

We return to the proof of Lemma \ref{lem:covering}.  Recall that the goal was to bound
\[ \G(x_L) + \G(x_{\overline{L}}) \lesssim \sqrt{Bs}.\]
By \eqref{eq:spikyflat} and \eqref{e:l1bound}, $\G(x_L) \lesssim \sqrt{Bs}$.  Furthermore,
\begin{align*}
    \G(x_{\overline{L}}) &\lesssim \sqrt{Bs} +  \sqrt{\log m} \left(
      Q_1\sqrt{B} + Q_2\sqrt{\min(k,s)} \right) \sqrt{ s\|x_{\overline{L}}\|_\infty +
      \log k } \\
    &\leq \sqrt{Bs} + \sqrt{\log m} \left( Q_1\sqrt{B} +
      Q_2\sqrt{\min(k,s)} \right) \left(\sqrt{ \frac{s\log m}{k}} + \sqrt{\log k}\right)  \\
&= \sqrt{Bs} \left(1 + Q_1 \left( \frac{\log m}{\sqrt{k}} +
    \sqrt{\frac{\log m\log k}{s}} \right) + Q_2 \left(
    \frac{\sqrt{\min(k,s)}\log m}{\sqrt{kB}} + \sqrt{\frac{\log m\log k}{B}\frac{\min(k,s)}{s}} \right) \right).
\end{align*}
Since we have assumed $B \gtrsim Q_2^2\log^2 m$, the $Q_2$ term is bounded by a constant.  
Further, $k \gtrsim Q_1^2\log^2 m$, and $s \geq B \gtrsim Q_1^2 \log
m\log k$, and so the $Q_1$ term is also constant.
Thus, we conclude
\[
\G(x) \leq \G(x_L) + \G(x_{\overline{L}}) \lesssim \sqrt{Bs},
\]
which was our goal.

\section{Conclusion}
In compressed sensing, it is of interest to obtain RIP matrices $\Phi$ supporting fast (i.e. nearly linear time)
matrix-vector multiplication, with as few rows as possible. 
Not only does fast
multiplication reduce the amount of time it takes to collect measurements, it
also speeds up many iterative recovery algorithms, which are based on repeatedly
 multiplying by $\Phi$ or $\Phi^*$.  Similarly, because of applications
in computational geometry, numerical linear algebra, and others, one wants
to obtain JL matrices with few rows and fast matrix-vector
multiplication.  In this work, we have shown how to construct RIP matrices
supporting fast matrix-vector multiplication, with fewer rows than was
previously known.  Combined with the work of~\cite{KW11}, this also implies
improved constructions of fast JL matrices.   

Our work leaves the obvious open question of removing the two
$O(\log(k\log d))$ factors separating our constructions
from the lower bounds.  It seems that both logarithmic factors come
from the estimation~\eqref{eq:dudleyize}.
It would be interesting to see if they could be removed by more sophisticated chaining 
techniques such as majorizing measures.

\section*{Acknowledgments}
We thank Piotr Indyk for suggesting the construction in this work and
for asking us whether it yields any stronger guarantees for the
restricted isometry property.

\bibliographystyle{abbrv}

\bibliography{allpapers}


\appendix

\section*{Appendix}
\begin{lemma}\LemmaName{operator-inequalities}
  For any complex matrix $A$, $\gennorm{2\rightarrow 2}{A}^2 \le
  \gennorm{1\rightarrow 1}{A} \cdot \gennorm{\infty\rightarrow
    \infty}{A}$.
\end{lemma}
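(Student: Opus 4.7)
The plan is to give a short direct proof via Cauchy--Schwarz, which avoids the machinery of Riesz--Thorin interpolation while yielding exactly the stated bound. The key observations are that $\gennorm{1\to 1}{A}$ is the maximum $\ell_1$ norm among the columns of $A$ and $\gennorm{\infty \to \infty}{A}$ is the maximum $\ell_1$ norm among the rows of $A$, so I have control over both the row and column sums of $|A_{ij}|$. I would then bound $\twonorm{Ax}^2$ for an arbitrary $x$ by using Cauchy--Schwarz in a way that splits $|A_{ij}|$ into two factors of $|A_{ij}|^{1/2}$, letting one factor interact with the row structure and the other with the column structure.

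Concretely, for a fixed row index $i$, I would write
\[
\Bigl|\sum_j A_{ij} x_j\Bigr|^2 \;\le\; \Bigl(\sum_j |A_{ij}|\Bigr)\Bigl(\sum_j |A_{ij}|\, |x_j|^2\Bigr) \;\le\; \gennorm{\infty\to\infty}{A}\sum_j |A_{ij}|\,|x_j|^2,
\]
using the definition of $\gennorm{\infty\to\infty}{A}$ in the last step. Summing over $i$ and swapping the order of summation gives
\[
\twonorm{Ax}^2 \;\le\; \gennorm{\infty\to\infty}{A}\sum_j |x_j|^2 \sum_i |A_{ij}| \;\le\; \gennorm{\infty\to\infty}{A}\cdot\gennorm{1\to 1}{A}\cdot\twonorm{x}^2,
\]
where the second inequality uses that $\sum_i |A_{ij}| \le \gennorm{1\to 1}{A}$ for every column $j$. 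Taking the supremum over unit-norm $x$ yields the claim.

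There is no real obstacle here: the only subtlety is getting the Cauchy--Schwarz split right so that one factor collapses against the row sum bound and the other collapses against the column sum bound after interchanging the order of summation. The argument is identical over $\bR$ and $\C$ since it only involves moduli. This is essentially the endpoint case of Riesz--Thorin interpolation applied at $1/p = 1/2 = \tfrac{1}{2}(1/1 + 1/\infty)$, written out by hand.
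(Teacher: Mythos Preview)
Your proof is correct and complete; it is essentially the Schur test written out by hand. The paper, however, takes a different route: it first treats the Hermitian case by observing that if $\lambda$ is the eigenvalue of largest modulus with eigenvector $v$, then $\gennorm{1\to 1}{A} \ge \gennorm{1}{Av}/\gennorm{1}{v} = |\lambda| = \gennorm{2\to 2}{A}$; then for general $A$ it applies this to $A^*A$ to obtain $\gennorm{2\to 2}{A}^2 = \gennorm{2\to 2}{A^*A} \le \gennorm{1\to 1}{A^*A} \le \gennorm{1\to 1}{A^*}\gennorm{1\to 1}{A} = \gennorm{\infty\to\infty}{A}\gennorm{1\to 1}{A}$. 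Your Cauchy--Schwarz splitting is more elementary in that it avoids any appeal to eigenvectors or the spectral characterization of the operator norm, and it works verbatim for rectangular matrices and even infinite matrices or integral operators (this is precisely the Schur test). The paper's argument is slightly slicker once one is willing to invoke the spectral theorem, and it makes transparent the intermediate inequality $\gennorm{2\to 2}{A} \le \gennorm{1\to 1}{A}$ for Hermitian $A$, which your approach does not isolate.
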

\begin{proof}
First we consider the case of Hermitian $A$, then arbitrary $A$.
   For Hermitian $A$, let $\lambda$ be the largest (in magnitude)
   eigenvalue of $A$ and $v$ be the associated eigenvector.  We have
   \[
   \gennorm{1\rightarrow 1}{A} \geq \frac{\gennorm{1}{Av}}{\gennorm{1}{v}} =
   \frac{\gennorm{1}{\lambda v}}{\gennorm{1}{v}} = |\lambda| =
   \gennorm{2\rightarrow 2}{A}.
   \]

   For arbitrary $A$,
   \[
   \gennorm{2\rightarrow 2}{A}^2 = \gennorm{2\rightarrow 2}{A^*A} \leq
   \gennorm{1\rightarrow 1}{A^*A} \leq \gennorm{1\rightarrow
     1}{A^*}\cdot \gennorm{1\rightarrow 1}{A} =
   \gennorm{\infty\rightarrow \infty}{A}\cdot \gennorm{1\rightarrow
     1}{A}  \]
   as desired. In the last inequality we used the fact that 
   $\|\cdot \|_{\infty\rightarrow\infty}$ is equal to the largest
   $\ell_1$ norm of any row, and  $\|\cdot \|_{1\rightarrow 1}$ is
   equal to the largest $\ell_1$ norm of any column.
\end{proof}

\begin{prop}\label{prop:go-real}
Let $f:\C^d \to \R^{2d}$ act entrywise by replacing $a + b\mathbf{i}$ with $(a,b)$.
For any integer $r$, define $F:\C^{r \times d} \to \R^{2r \times 2d}$ to act entrywise by replacing an entry $a + b\mathbf{i}$ by the $2 \times 2$ matrix
\[\begin{bmatrix} a & -b\\b & a \end{bmatrix}.\]
Recall that $T_k \subset \C^d$ is the set of unit norm $k$-sparse
complex vectors, and let $S_{s} \subset \R^{s}$ be the set of unit norm
$s$-sparse real vectors.  Recall that $\decnorm{\cdot}{X}$ is a norm
on $\C^d$ given by $\decnorm{x}{X} = \max_b \twonorm{ A_b x }$, and
let $\decnorm{\cdot}{\tilde{X}}$ be a norm on $\R^{2d}$ given by
$\decnorm{x}{\tilde{X}} = \max_b \twonorm{ F(A_b) x }$.  Then
\begin{enumerate}
\item \label{stillgood} If \eqref{eq:good} holds, then $\max_b \sup_{x
    \in S_{s}} \twonorm{F(A_b) x} \leq \ell(s)$ for $s \leq 2k$.
\item \label{coveringnumbers} With $\decnorm{\cdot}{\tilde{X}}$ as
  above, we have
\[
\N( T_k, \decnorm{\cdot}{X}, u ) \leq \N(S_{2k}, \decnorm{\cdot}{\tilde{X}}, u ).
\]
\end{enumerate}
\end{prop}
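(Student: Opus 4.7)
The plan is to observe that the map $F$ is a faithful real representation of complex matrix multiplication: for any complex matrix $A$ and complex vector $z$, one has the identity $F(A)\,f(z) = f(Az)$, and $f$ is an $\ell_2$-isometry from $\C^d$ onto $\R^{2d}$. Both facts follow by direct computation from the definitions, since the $2\times 2$ block $\left[\begin{smallmatrix} a & -b\\ b & a\end{smallmatrix}\right]$ is precisely the standard real matrix representation of multiplication by $a+b\mathbf{i}$, and pairing $(a,b)$ has the same squared norm as $|a+b\mathbf{i}|^2$.

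For part~\ref{stillgood}, given any real unit-norm $s$-sparse $x \in S_s$ with $s \leq 2k$, I would set $z = f^{-1}(x) \in \C^d$, where $z_j = x_{2j-1} + \mathbf{i}\, x_{2j}$. A nonzero $z_j$ requires at least one of $x_{2j-1}, x_{2j}$ to be nonzero, so $z$ has at most $s$ nonzero complex coordinates, and $\twonorm{z} = \twonorm{x} = 1$. Hence $z \in T_s$, and the homomorphism identity yields
\[
\twonorm{F(A_b) x} \;=\; \twonorm{f(A_b z)} \;=\; \twonorm{A_b z} \;\leq\; \ell(s),
\]
where the final inequality is exactly the hypothesis~\eqref{eq:good}, which is valid since $s \leq 2k$.

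For part~\ref{coveringnumbers}, I would use the same bijection $f$. The homomorphism identity gives, for every $z \in \C^d$,
\[
\decnorm{z}{X} \;=\; \max_b \twonorm{A_b z} \;=\; \max_b \twonorm{F(A_b) f(z)} \;=\; \decnorm{f(z)}{\tilde{X}},
\]
so $f$ is an isometry between $(\C^d, \decnorm{\cdot}{X})$ and $(\R^{2d}, \decnorm{\cdot}{\tilde{X}})$. Consequently any $u$-covering of $f(T_k)$ in the latter norm pulls back under $f^{-1}$ to a $u$-covering of $T_k$ of the same cardinality in the former, giving $\N(T_k, \decnorm{\cdot}{X}, u) = \N(f(T_k), \decnorm{\cdot}{\tilde{X}}, u)$. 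Finally, since a $k$-sparse complex vector has at most $2k$ nonzero real coordinates under $f$ and preserves its $\ell_2$ norm, we have $f(T_k) \subset S_{2k}$, so by monotonicity of the (external) covering number under subset inclusion, $\N(f(T_k), \decnorm{\cdot}{\tilde{X}}, u) \leq \N(S_{2k}, \decnorm{\cdot}{\tilde{X}}, u)$, completing the proof.

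The only subtlety worth flagging is the sparsity accounting in both directions: real $s$-sparsity maps to complex sparsity at most $s$ (used in part~\ref{stillgood}), while complex $k$-sparsity maps to real sparsity at most $2k$ (used in part~\ref{coveringnumbers}). Beyond this bookkeeping, there is no real technical obstacle; the proposition is essentially a formal consequence of the fact that $F$ and $f$ give an isometric embedding of complex-linear algebra into real-linear algebra.
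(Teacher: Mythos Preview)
Your proof is correct and follows essentially the same approach as the paper: both arguments rest on the identities $F(A)f(z)=f(Az)$ and $\twonorm{f(z)}=\twonorm{z}$, use $f^{-1}(S_s)\subset T_s$ for part~\ref{stillgood}, and combine the isometry $\decnorm{z}{X}=\decnorm{f(z)}{\tilde{X}}$ with $f(T_k)\subset S_{2k}$ for part~\ref{coveringnumbers}. Your explicit mention of external covering numbers to justify monotonicity under inclusion is a welcome clarification that the paper leaves implicit.
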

\begin{proof}
  By construction, we have $f(Ax) = F(A)f(x)$, and also
  $\twonorm{f(x)} = \twonorm{x}$.  Further, $f(T_k) \subset S_{2k}$
  and $f^{-1}(S_s) \subset T_s$.  Thus, item \ref{stillgood} follows
  because
  \[ 
  \max_b \sup_{x \in S_s} \twonorm{F(A_b)x} \leq  \max_b \sup_{y \in T_s} \twonorm{F(A_b)f(y)} = \max_b \sup_{y \in T_s} \twonorm{A_by} \leq l(s)
  \]
  Similarly, item \ref{coveringnumbers} follows because for any $x,y
  \in T_k$,
  \begin{align*}
    \decnorm{x - y}{X} &= \max_{b \in [m]} \twonorm{A_b (x - y)} \\
    &= \max_{b \in [m]} \twonorm{ F(A_b) f(x - y) } \\
    &= \max_{b \in [m]} \twonorm{ F(A_b) ( f(x) - f(y))} \\
    &= \decnorm{f(x) - f(y)}{\tilde{X}}. 
  \end{align*}
  Hence
  \[
  N(T_k, \decnorm{\cdot}{X}, u) = N(f(T_k), \decnorm{\cdot}{\tilde{X}}, u) \leq N(S_{2k}, \decnorm{\cdot}{\tilde{X}}, u).
  \]

\end{proof}

\begin{lemma}\label{lemma:largeRV}\LemmaName{large-RV} 
Let $\mathcal{F}$ denote the $d\times d$ Fourier matrix.
  Let $\Omega$ with $\abs{\Omega} = B$ be a random multiset with
  elements in $[d]$, and for $S\subseteq [d]$ let
  $\mathcal{F}_{\Omega\times S}$ denote the $|\Omega| \times |S|$
  matrix whose rows are the rows of $\mathcal{F}$ in $\Omega$,
  restricted to the columns in $S$.
Then for any $t > 1$,
\[ 
	\max_{\abs{S} = k} \enorm{}{\mathcal{F}_{\Omega \times S}} \lesssim
        \sqrt{t(B + k\beta)} 
\]
with probability at least 
\[
	1 - O\left( \exp\left( -\min\left\{ t^2, t\beta \right\}
          \right) \right),
\]
where
\[ \beta = \log^2 k \log d \log B. \]
\end{lemma}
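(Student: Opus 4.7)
The plan is to fix a set $S \subseteq [d]$ of size $k$, bound $\opnorm{\mathcal{F}_{\Omega\times S}}$ with the claimed tail for this fixed $S$, and then union bound over the $\binom{d}{k} \le e^{k\log d}$ such $S$. Writing
\[
  M_S \;:=\; \mathcal{F}_{\Omega\times S}^{*}\mathcal{F}_{\Omega\times S} \;=\; \sum_{i\in\Omega}(f_i)_S (f_i)_S^{*},
\]
where $(f_i)_S$ is the restriction to the columns in $S$ of the $i$th row of $\mathcal{F}$, orthonormality of the columns of $\mathcal{F}$ (and uniformity of $\Omega$) gives $\mathbb{E}_\Omega M_S = B\, I_S$, so
\[
  \opnorm{\mathcal{F}_{\Omega\times S}}^{2} \;=\; \opnorm{M_S} \;\le\; B \;+\; \opnorm{M_S - B\, I_S},
\]
and the task reduces to controlling the fluctuation $\opnorm{M_S - B\, I_S}$.

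First I would bound $\mathbb{E}\,\opnorm{M_S - B\, I_S}$ via the Rudelson--Vershynin / Cheraghchi--Guruswami--Velingker argument: symmetrize to the Rademacher process $\opnorm{\sum_{i\in\Omega}\eta_i (f_i)_S (f_i)_S^{*}}$ and apply the non-commutative Khintchine inequality (equivalently, Dudley-style chaining for the induced process) to get $\mathbb{E}\,\opnorm{M_S - B\, I_S} \lesssim \sqrt{B k\beta}$. The $\log^2 k\,\log d$ part of $\beta$ is produced by the RV chaining, while the extra $\log B$ factor is contributed by Khintchine applied to the $B$-term Rademacher sum.

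Second I would upgrade the expectation bound to a tail bound via Talagrand's concentration inequality for suprema of empirical processes of PSD matrices (or, equivalently, matrix Bernstein): each summand is PSD with $\opnorm{(f_i)_S (f_i)_S^{*}} \le k$ because $|\mathcal{F}_{ij}| = 1$, and the weak variance $\opnorm{\sum_i \mathbb{E}[(f_i)_S (f_i)_S^{*}]^2}$ is of order $Bk$. This produces the Bennett-type tail
\[
  \Pr\!\left[\opnorm{M_S - B\, I_S} > C\sqrt{B k\beta} + u\right] \;\lesssim\; \exp\!\left(-c\min\!\left\{u^2/(Bk),\; u/k\right\}\right).
\]
Calibrating $u \asymp t(B + k\beta)$, with AM-GM matching the sub-Gaussian and sub-exponential regimes to $t^2$ and $t\beta$ respectively, yields $\opnorm{M_S} \lesssim t(B + k\beta)$, at fixed $S$, with failure probability $O(\exp(-\min(t^2, t\beta)))$. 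Finally I would union bound over the $e^{k\log d}$ choices of $S$; since $\beta \ge \log d$, the cost $k\log d$ is swallowed by $\min(t^2, t\beta)$ upon inflating $t$ by a constant factor, preserving the stated tail.

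The main obstacle is the first step: tracking the exact $\beta = \log^2 k\,\log d\,\log B$ dependence and, in particular, threading the $\log B$ factor through the chaining/Khintchine computation so that the sub-Gaussian and sub-exponential regimes line up precisely at $t^2$ and $t\beta$ and the union bound fits comfortably inside the $\min(t^2,t\beta)$ exponent. The orthonormality reduction, the Talagrand concentration, and the union bound itself are routine by comparison.
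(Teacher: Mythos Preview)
Your proposal has a genuine gap in the union-bound step. You want to fix $S$, obtain a tail of the form $\exp\bigl(-c\min\{u^2/(Bk),\,u/k\}\bigr)$ for the fixed-$S$ deviation, and then pay a factor $\binom{d}{k}\le e^{k\log d}$ for the union over all $S$ of size $k$. You then assert that ``since $\beta\ge\log d$, the cost $k\log d$ is swallowed by $\min(t^2,t\beta)$ upon inflating $t$ by a constant factor.'' But $\beta=\log^2 k\,\log d\,\log B$ is only polylogarithmic, so $k\log d\le k\beta$ is absorbed by $t\beta$ only when $t\gtrsim k$, not when $t$ is inflated by a constant. Concretely, in the sub-exponential regime the per-$S$ exponent is $u/k$, so to beat $k\log d$ you need $u\gtrsim k^2\log d$; with your calibration $u\asymp t(B+k\beta)$ this forces $t\gtrsim k/(\log^2 k\,\log B)$ in the interesting range $B\lesssim k\beta$. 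Hence the stated tail $\exp(-\min(t^2,t\beta))$ cannot survive the union bound for $t$ of constant order.

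There is also a conceptual mismatch in the expectation step: the $\log^2 k\,\log d$ contribution to $\beta$ in the Rudelson--Vershynin argument arises precisely from chaining over \emph{all} $k$-sparse directions, i.e., from handling the supremum over $S$. For a single fixed $S$ there is no such chaining and you would not naturally produce the $\log d$ factor at all; matrix Bernstein for a fixed $S$ gives only a $\log k$ from the ambient dimension. The paper therefore proceeds differently: it defines $X=\sup_{|S|=k}\bigl\|I_k-\tfrac{1}{B}\mathcal{F}_{\Omega\times S}^{*}\mathcal{F}_{\Omega\times S}\bigr\|$, takes the supremum over $S$ \emph{before} concentrating, uses the RV self-bounding inequality $\E X\lesssim\sqrt{(k\beta/B)(\E X+1)}$ to get $\E X\le 1+O(k\beta)/B$, and then applies Talagrand's concentration for suprema of empirical processes (as in \cite{RV08}, Theorem~3.9) directly to $X$. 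This yields $\Pr[X>Ct\alpha]\le 3\exp(-C't\alpha B/k)+2\exp(-t^2)$ with $\alpha=1+O(k\beta)/B$, from which the stated bound follows without ever union-bounding over $S$.
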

\begin{proof} (Implicit in~\cite{RV08}).
Let $X = \sup_{\abs{S} = k} \enorm{}{ I_k - \frac{1}{B}
  \mathcal{F}_{\Omega \times
    S}^* \mathcal{F}_{\Omega \times S}}$, where $I_k$ is the $k\times
k$ identity matrix.
It is shown in~\cite{RV08} that 
\[
	\E_{\Omega} X \lesssim \sqrt{\frac{k \log^2k \log d \log B}{B}(\E X +
          1)} =: \sqrt{ \frac{ k\beta}{B} (\E X + 1) }.
\]
This implies that 
\begin{equation}\label{eqn:expbound}
	\E X \leq 1 + \frac{ O(k\beta)}{B} =: \alpha.
\end{equation}
Indeed, whenever $x^2 \leq A(x+1)$, we have $x < A + 1$ or else we
conclude $(A+1)^2 \leq A^2 + 2A$.
Let $\alpha$ denote the right hand side of \eqref{eqn:expbound}.  We
may plug this expectation into the proof of Theorem 3.9
in~\cite{RV08}, and we obtain
\[ \Pr\left[ X > C t \alpha \right] \leq 3\exp( - C' t \alpha B/k ) +
2 \exp( - t^2 )\] 
for constants $C$ and $C'$.
In the case $X \le Ct\alpha$, we have
\[ 
	\max_{\abs{S} = k} \enorm{}{\mathcal{F}_{\Omega \times S}}
        \leq \sqrt{
          B(1 + C t \alpha ) } \leq \sqrt{B} + \sqrt{ BC t \alpha},
\]
and so we conclude that
\[ \max_{\abs{S} = k} \enorm{}{\mathcal{F}_{\Omega \times S}} \leq
\sqrt{B} +
O\left( \sqrt{ t(B + k \beta) } \right) \]
with probability at least 
\[ 1 - 3 \exp( - C' t (\beta + B/k) ) - 2\exp(-t^2).\]
\end{proof}

\end{document}